\documentclass[lettersize,journal,twoside]{IEEEtran}
\usepackage{amsmath,amssymb,amsfonts}
\usepackage{array}
\allowdisplaybreaks[4]
\usepackage[caption=false,font=normalsize,labelfont=sf,textfont=sf]{subfig}
\usepackage{textcomp}
\usepackage{stfloats}
\usepackage{url}
\usepackage{verbatim}
\usepackage{graphicx}
\usepackage{cite}
\usepackage{makecell}
\usepackage{graphicx,color}
\usepackage{subfig}
\usepackage{multirow}
\usepackage{textcomp}
\usepackage{booktabs}
\usepackage{balance}
\usepackage[linesnumbered, ruled]{algorithm2e}
\SetKwRepeat{Do}{do}{while}%
\usepackage[thmmarks,amsmath]{ntheorem}
\theoremseparator{:}

\newtheorem{lemma}{Lemma}
\newtheorem{proposition}{Proposition}
\theoremheaderfont{\it}
\theorembodyfont{\normalfont}
\theoremsymbol{\ensuremath{\Box}}
\newtheorem*{proof}{Proof}
\theoremheaderfont{\it}
\theorembodyfont{\normalfont}
\theoremsymbol{}
\newtheorem{remark}{Remark}
\usepackage{setspace}
\begin{document}

\title{Physical Layer Security for Sensing-Communication-Computing-Control Closed Loop: A Systematic Security Perspective}

\author{Chengleyang Lei, Wei Feng,~\IEEEmembership{Senior Member,~IEEE}, Yunfei Chen,~\IEEEmembership{Fellow,~IEEE}, Jue Wang,~\IEEEmembership{Member,~IEEE}, \\Ning Ge,~\IEEEmembership{Member,~IEEE}, Shi Jin,~\IEEEmembership{Fellow,~IEEE}, and Tony Q. S. Quek,~\IEEEmembership{Fellow,~IEEE}
	\thanks{The work of W. Feng was supported in part by the National Natural Science Foundation of China under Grant 62541104, Grant 62425110, and Grant U22A2002, in part by the Key Research and Development Project of Nantong (Special Project for Prospective Technology Innovation) under Grant GZ2024002, and in part by the FAW Jiefang Automotive Co., Ltd. The work of J. Wang was supported in part by the National Natural Science Foundation of China under Grant 62171240. The work of T. Q. S. Quek was supported in part by the National Research Foundation, Singapore and Infocomm Media Development Authority under its Communications and Connectivity Bridging Funding Initiative. Any opinions, findings and conclusions or recommendations expressed in this material are those of the authors and do not reflect the views of National Research Foundation, Singapore.}
	\thanks{Chengleyang Lei, Wei Feng (corresponding author), and Ning Ge are with the Department of Electronic Engineering, State Key Laboratory of Space Network and Communications, Tsinghua University, Beijing 100084, China (email: lcly21@mails.tsinghua.edu.cn, fengwei@tsinghua.edu.cn, gening@tsinghua.edu.cn).}
	\thanks{Yunfei Chen is with the Department of Engineering, University of Durham, DH1 3LE Durham, U.K. (e-mail: yunfei.chen@durham.ac.uk).}
	\thanks{Jue Wang is with the School of Information Science and Technology, Nantong University, Nantong 226019, China (email: wangjue@ntu.edu.cn).}
	\thanks{Shi Jin is with the National Mobile Communications Research Laboratory, Southeast University, Nanjing 210096, China (email: jinshi@seu.edu.cn).}
	\thanks{T. Q. S. Quek is with the Singapore University of Technology and Design, Singapore 487372, and also with the Department of Electronic Engineering, Kyung Hee University, Yongin 17104, South Korea (e-mail: tonyquek@sutd.edu.sg).}
}
        % <-this % stops a space
%\thanks{}% <-this % stops a space

% The paper headers
%\markboth{IEEE WIRELESS COMMUNICATIONS LETTERS}
%{Control-Oriented Power Allocation for Integrated Satellite-UAV Networks}

%\IEEEpubid{0000--0000/00\$00.00~\copyright~2021 IEEE}

% Remember, if you use this you must call \IEEEpubidadjcol in the second
% column for its text to clear the IEEEpubid mark.

\maketitle

\begin{abstract}
In industrial automation or emergency rescue, sensors and robots work together with the help of an edge information hub (EIH) containing both communication and computing modules. Typically, the EIH collects the sensing data via the sensor-to-EIH link, processes data and then makes decisions on board before sending commands to the robot via the EIH-to-robot link. This forms a sensing-communication-computing-control ($\textbf{SC}^3$) closed loop. In practice, the inherent openness of wireless links within the closed loop leads to susceptibility to eavesdropping. To this end, this paper refines the conventional physical layer security (PLS) approach with a systematic thinking to safeguard the $\textbf{SC}^3$ closed loop. The closed-loop negentropy (CNE), a new metric for the performance of the whole $\textbf{SC}^3$ closed loop, is maximized under the closed-loop security constraint. The transmit time, power, bandwidth of both wireless links, and the computing capability, are jointly designed. The optimization problem is non-convex. We leverage the Karush-Kuhn-Tucker (KKT) conditions and the monotonic optimization (MO) theory to derive its globally optimal solution. Simulation results show the performance gain of the proposed systematic approach, and reveal the advantage of exploiting the closed-loop structure-level PLS over the link-level or sum-link-level designs.
\end{abstract}

\begin{IEEEkeywords}
Closed-loop control, closed-loop negentropy (CNE), closed-loop security constraint, physical layer security (PLS).
\end{IEEEkeywords}

\section{Introduction}
In hazardous or inaccessible environments, autonomous machines and robots have been increasingly utilized to execute mission-critical tasks that are dangerous or dull for humans. In applications such as emergency rescue~\cite{rescue_robot}, scientific exploration~\cite{scientific_robot}, and remote industrial automation~\cite{industrial_robot}, the deployment of robots can help to reduce risks to humans and improve efficiency. Such tasks rely on communications networks to connect the sensors and robots, facilitating the exchange of vital information like sensor data and control commands. However, in remote or post-disaster areas, terrestrial networks are often unavailable due to the harsh geographical conditions. Non-terrestrial networks (NTN), which consist of satellites and autonomous aerial vehicles (AAVs, also known as UAVs), are vital infrastructure to support these remote tasks due to the advantages of wide coverage and flexible deployment~\cite{network,network1,network2}. In such cases, UAVs can be equipped with devices integrating communication and computing modules for information collection and processing, referred to as edge information hubs (EIHs). These EIHs provide edge communications and computing support for ground robots, while utilizing a dedicated satellite link as backhaul to connect to a remote control center for controlling or computing offloading\cite{jsac}. 

However, the integration of NTNs introduces new security risks. Due to the openness and long distance of air-ground links, sensitive information, including sensor data and control commands, is vulnerable to eavesdropping~\cite{satellite_security, UAV_security}. While traditional cryptographic methods offer robust data protection, their high computational complexity may introduce additional latency and energy consumption~\cite{PLS_survey1}, which may be detrimental in time-sensitive and resource-constrained closed-loop control scenarios. Federated learning has also emerged as a privacy-preserving approach at the computing layer, allowing distributed devices to learn a shared model without uploading raw data~\cite{FL1, FL2}. However, the information regarding control commands and sensing features will still be transmitted via radio channels, necessitating physical-layer security measures~\cite{FL3}. Recently, physical-layer encryption (PLE) techniques, such as constellation obfuscation, have attracted increasing interest~\cite{PLE1,PLE2}. By obfuscating transmitted symbols on the physical layer, PLE remains effective even with a high eavesdropper signal-to-noise ratio~\cite{PLE1}. However, typical PLE methods have high key-to-data ratios, which may introduce high computational complexity for the hardware design~\cite{PLE3}. As an alternative, physical layer security (PLS) techniques are recognized as a promising approach for safeguarding NTN-based communications. By exploiting the characteristics of wireless channels and designing transmission parameters, PLS techniques can improve the channel advantage of legitimate links over eavesdropping links, thereby directly enhancing the communication security~\cite{PLS_survey2, PLS_survey3,PLS_survey4}.   

Unlike traditional point-to-point communication systems, NTN-based control systems typically incorporate multiple coupled communication links within an integrated  sensing-communication-computing-control ($\textbf{SC}^3$) closed loop~\cite{network}. This brings unique challenges when applying PLS techniques. Specifically, an $\textbf{SC}^3$ closed loop involves two critical communication links: the sensor-to-EIH link, which sends sensor data to the EIH for analysis, and the EIH-to-robot link, which transmits the control commands from the EIH to the robot to guide its action. Information leakage on either link could be harmful to the overall security. Therefore, the two links should be considered as a unified entity, necessitating a systematic security assessment. However, current PLS techniques primarily focus on link-level or sum-link-level security metrics (e.g., the secrecy rate of each individual link), without a comprehensive thinking of coupled links within an $\textbf{SC}^3$ closed loop. Motivated by the above observations, in this paper, we consider the security performance of the whole $\textbf{SC}^3$ closed loop from a systematic perspective. Specifically, we establish a holistic closed-loop security constraint on the total leaked task-relevant information, under which we jointly design the communication and computing resource allocation, including bandwidth, power, time, and computing capability, to improve the overall closed-loop control performance.

\subsection{Related Works}      
Networked control systems (NCSs), where the components of control systems are connected through communication networks, have been applied widely due to the advantages of flexibility and maintainability~\cite{NCS_survey1, NCS_survey2,NCS1}. The relationship between the communication and control components in NCSs has been investigated in many works~\cite{NCS2,NCS3,LQR}. In \cite{NCS2},  a lower bound on the communication data rate to stabilize a linear control system was provided, which was related to the state matrix of the system. In \cite{NCS3}, the authors proposed the optimal control policy to minimize the directed information transmitted, considering a constraint on control performance measured by the linear quadratic regulator (LQR) control cost. The authors in \cite{LQR} further derived a lower bound on the communication data rate to achieve a desired LQR cost. Based on the relationship between the communication and control components, some research has been conducted on the system design aiming at improving the overall performance of $\textbf{SC}^3$ closed loops~\cite{NCS4,single_loop}. The authors in \cite{NCS4} investigated a wireless NCS supported by a space-air-ground integrated network. The transmission policy was optimized to minimize the weighted sum of control cost and power consumption. In \cite{single_loop}, the authors proposed a joint uplink and downlink design framework, where the LQR cost was minimized by jointly optimizing the uplink and downlink transmit power, time, bandwidth, and the CPU frequency within an $\textbf{SC}^3$ closed loop. The authors proposed a new metric named closed-loop negentropy (CNE) to measure the task-relevant information transmitted through an $\textbf{SC}^3$ closed loop. These studies provided great insight into the design of the $\textbf{SC}^3$ closed loop. However, most of them did not consider the security issues.

Some research has been conducted on the security of NCSs~\cite{NCS_security1,NCS_security2,NCS_security3}. The authors in \cite{NCS_security1} proposed an energy-efficient security-aware architecture for NCSs under deception attacks. The selective encryption method was adopted to save energy and detect the attack. In \cite{NCS_security2}, the resilient control in a wireless NCS was considered under a denial-of-service attack. The authors derived the Nash power strategies and optimal control strategy, and then proposed a pricing mechanism design approach to keep the system stable. Yang \textit{et al.} in \cite{NCS_security3} exploited the variable sampling approach to deal with the dropouts caused by jamming attacks and designed a resilient variable sampling controller by a delta operator method. The above works mainly considered active attacks on NCSs. The authors in \cite{NCS_security5} investigated an NCS of permanent magnetic linear motor under false data injection attacks. They proposed an observer-based non-singular terminal sliding-mode controller, where an extended state observer was used to estimate the system states as well as the attack/disturbance terms. In \cite{NCS_security6} a predictor-based decentralized periodic event-triggered output-feedback secure control framework for nonlinear large-scale NCSs was proposed to guarantee good control performance under replay attacks while reducing communication bandwidth. The countermeasures to passive eavesdropping are also important to protect the confidentiality of critical tasks. In \cite{NCS_security4}, the authors proposed a security provider to encrypt the data in power line communications networks for metering and automation control systems. In \cite{NCS_security7}, a bilinear-map-based key encapsulation mechanism combined with symmetric encryption was proposed to protect a distributed model predictive control system against passive eavesdropping. However, the cryptography-based approaches introduce additional complexity and latency, which are unsuitable for time-sensitive closed-loop control applications. 

The PLS techniques have been widely utilized to safeguard NTN-based communications~\cite{Zhang2019,PLS1,PLS2,PLS3}. Zhang \textit{et al.} in \cite{Zhang2019} considered both the UAV-to-ground and ground-to-UAV communications subject to a ground-based eavesdropper. The UAV's trajectory and the transmit power of the transmitter were jointly optimized to maximize the average secrecy rates. Reference \cite{PLS1} introduced a cooperative UAV that transmits jamming signals for safeguarding the UAV-to-ground communications. The minimum secrecy rate was maximized by jointly optimizing the UAV trajectory, the transmit power, and user scheduling. The authors in \cite{PLS2} further investigated three-dimensional trajectory design to maximize the worst-case secrecy rate, considering imperfect eavesdropper location information. In \cite{PLS3}, a satellite-supported Internet of Things Network with a UAV relay was investigated. A max–min secrecy rate optimization problem was formulated where the power allocation, UAV beamforming, and the UAV position were jointly optimized. Most of these works focus on single-link security metrics, such as the secrecy rate. However, in the $\textbf{SC}^3$ closed loop, the sensor-to-EIH link and EIH-to-robot link are inherently coupled, with their transmitted information being closely related. Therefore, to effectively secure the $\textbf{SC}^3$ closed loop, a more systematic view that considers the coupled links within the loop as a single entity should be adopted.
      
\subsection{Main Contributions and Organization}
Motivated by the above considerations, in this paper, we investigate an NTN-based $\textbf{SC}^3$ closed loop in the presence of a ground-based eavesdropper attempting to intercept transmissions of both sensor data and control commands. We maximize the CNE of the $\textbf{SC}^3$ closed loop by jointly optimizing the communication and computing parameters of the closed loop, with a constraint on the total task-relevant information leakage. We propose efficient algorithms to obtain the globally optimal solution to the optimization problem and provide simulation results to validate the proposed algorithms. The main contributions are summarized as follows.
\begin{itemize}
	\item We investigate an NTN-assisted $\textbf{SC}^3$ closed loop, consisting of a sensor, a UAV-mounted EIH, and a robot, in the presence of a potential eavesdropper attempting to intercept the confidential task information. Considering the holistic security of the $\textbf{SC}^3$ closed loop, we establish a closed-loop security constraint such that the total task-relevant information leakage should not exceed a threshold. Under this constraint, we jointly optimize the bandwidth, power, and transmission times for both the sensor-to-EIH link and EIH-to-robot link, as well as the computing capability, to maximize the CNE of the $\textbf{SC}^3$ closed loop.
	\item The formulated CNE maximization problem is non-convex due to the coupling among the time, power, and bandwidth. We first simplify the original problem by deriving the optimal configurations for computing capability and transmission time. Subsequently, the recast problem is analyzed under distinct channel conditions. For the superior legitimate channel case, we analyze properties of the globally optimal solution based on Karush-Kuhn-Tucker (KKT) conditions and propose an effective algorithm to find the solution. For other channel conditions, we transform the problem to a monotonic optimization (MO) problem, and utilize the MO theory to find its globally optimal solution.
	\item We provide simulation results to show the performance gain of the proposed scheme over the traditional link-oriented system design method, revealing the superiority of the closed-loop security framework. In addition, we discuss the influence of key system parameters on the achievable CNE via simulations, offering insights into the secure $\textbf{SC}^3$ closed loop design.
\end{itemize}

We further clarify the difference between this work and our prior work~\cite{single_loop}. While sharing the foundational $\textbf{SC}^3$ closed loop model, this work incorporates the closed-loop security considerations and thus introduces a fundamentally new performance–security tradeoff. In \cite{single_loop}, the optimal solution fully utilizes transmit power, and the design reduces to a convex bandwidth allocation to achieve an uplink–downlink balance. In this work, simply increasing transmit power will also increase the eavesdropping risk. Therefore, the power and bandwidth need to be designed carefully. Moreover, the non-convex coupling among the time, bandwidth, and power makes the original method in \cite{single_loop} no longer applicable, which motivates a new globally optimal solution based on KKT analysis and MO. In addition, we reveal a new theoretical insight: for each link, the optimal solution depends on the relative ordering of the legitimate and eavesdropping channel gains, which will be shown in \textbf{Proposition \ref{Prop2}}.

The rest of this paper is organized as follows. In Section \ref{Section:System_Model}, we introduce the model of the considered NTN-based control system, and formulate a CNE maximization problem considering the closed-loop security constraint. In Section \ref{Section:Simplification}, the optimization problem is recast to a more tractable form. In Section \ref{Section:Algorithm}, we propose algorithms to obtain its globally optimal solution. Simulation results are provided in Section \ref{Section:Simulation} with further discussions. Finally, we conclude this paper in Section \ref{Section:Conclusion}. For ease of reference, the abbreviations used in this paper are listed in Table~\ref{tab1}.

\begin{table}[!t]
	\centering
	\caption{Summary of main abbreviations.}
	\label{tab1}
	\begin{tabular}{cp{5.3cm}}\toprule
		\textbf{Abbreviation} & \textbf{Full name}\\\midrule
		AAV & autonomous aerial vehicle \\\hline
		CNE & closed-loop negentropy \\\hline
		EIH & edge information hub \\\hline
		KKT & Karush-Kuhn-Tucker \\\hline
		LICQ & linear independence constraint qualification \\\hline
		LoS & line-of-sight \\\hline
		LQR & linear quadratic regulator \\\hline
		MO & monotonic optimization  \\\hline
		NCS & networked control system \\\hline
		NTN & non-terrestrial network \\\hline
		PLE & physical layer encryption \\\hline
		PLS & physical layer security \\\hline
		$\textbf{SC}^3$ & sensing-communication-computing-control 
		\\\bottomrule
	\end{tabular}
\end{table}

\section{System Model and Problem Formulation}
\label{Section:System_Model}
\begin{figure} [t]
	\centering
	\includegraphics[width=0.9\linewidth]{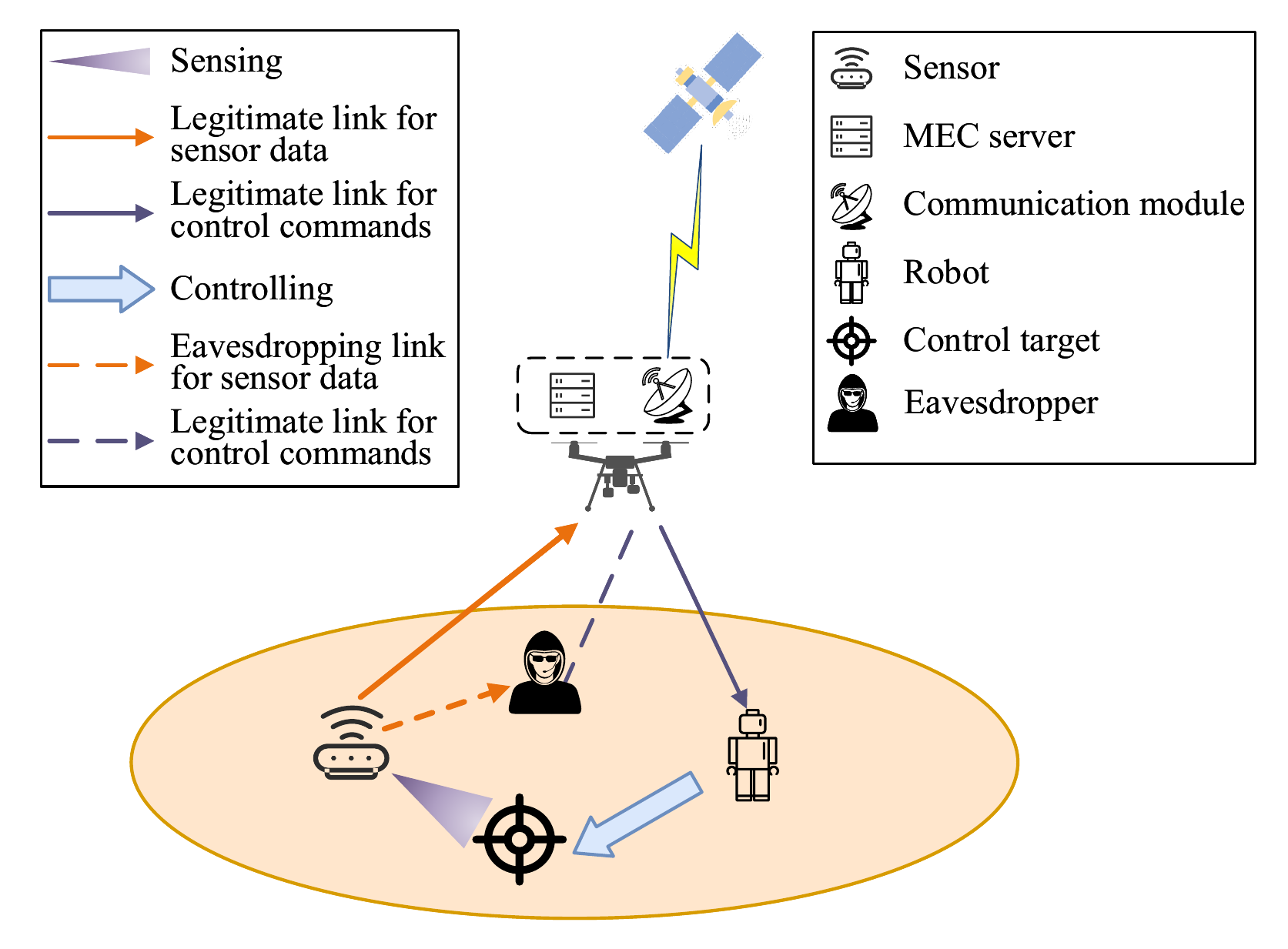}
	\caption{Illustration of an $\textbf{SC}^3$ closed loop consisting of a sensor, an EIH, and a robot. The EIH is elevated on a UAV, and telecontrolled by a dedicated satellite link. One eavesdropper wiretaps both the sensor-to-EIH and EIH-to-robot links, to accordingly crack the task information.}
	\label{fig:system}
\end{figure}
\subsection{System Model}
As shown in Fig. \ref{fig:system}, we consider an NTN-assisted control system. A UAV-mounted EIH integrating communication and computing modules provides edge computation and communication services for a ground robot executing a control task. The EIH utilizes a dedicated satellite link to connect to a remote control center for real-time monitoring and telecontrol. The sensor, the EIH, and the robot constitute an $\textbf{SC}^3$ closed loop for task completion. Specifically, within the loop, the sensor acquires information regarding the control target and transmits it to the EIH. Subsequently, the EIH processes the sensor data to generate proper control commands. Next, the control commands are transmitted to the robot to guide its actions. The whole process operates periodically within the $\textbf{SC}^3$ closed loop. Furthermore, the security implications of a potential ground-based eavesdropper are considered. The position of the eavesdropper can be detected by a UAV-mounted camera or radar~\cite{position1,position2}, which is assumed to be known at the EIH.\footnote{In practice, the eavesdropper’s position may only be known approximately. In such a case, a common approach is to model the eavesdropper within a bounded uncertainty region (e.g., a circular area) around the estimated location~\cite{position3}. With the position uncertainty, the proposed framework can be extended to a robust design that considers the worst-case secrecy performance, which is an interesting direction for future work.} The eavesdropper attempts to intercept transmissions on both the sensor-to-EIH link and the EIH-to-robot link to obtain task-relevant information. In the following, we will present detailed models for the $\textbf{SC}^3$ closed loop and the eavesdropping process. 

The sensor and robot can be regarded as a single virtual user, as they cooperate to accomplish a common control task. Consequently, the sensor-to-EIH link and the EIH-to-robot link can be regarded as the uplink and downlink of this virtual user. Denoting the transmit power, time, and bandwidth of the two links as $p_{\text{u/d}},t_{\text{u/d}},B_{\text{u/d}}$, respectively, where the subscripts $\text{u}$ and $\text{d}$ represent the uplink and downlink, respectively, the corresponding data rates can be calculated as
\begin{equation}
	R_{\text{u/d}}\left( p_{\text{u/d}},B_{\text{u/d}}\right)  = B_{\text{u/d}}\log_2\left( 1+\frac{p_{\text{u/d}}g_{\text{u/d}}}{B_{\text{u/d}}N_0}\right),
\end{equation}
where $g_{\text{u}}$ and $g_{\text{d}}$ denote the large-scale channel gains of the uplink and downlink, and $N_0$ denotes the noise power spectral density. We assume that air-ground channels are dominated by the line-of-sight (LoS) component and thus the effects of small-scale fading are negligible \cite{Zhang2019}. Due to the limited system resources, we have the following constraints
\begin{align}
	&t_{\text{u}}+t_{\text{c}}+t_{\text{d}}\leq T,\label{time_constrain}\\
	&p_{\text{u}} \leq P_{\text{umax}},\quad p_{\text{d}} \leq P_{\text{dmax}},\\
	&B_{\text{u}} \leq B_{\text{max}},\quad B_{\text{d}} \leq B_{\text{max}},
\end{align}
where $t_{\text{c}}$ denotes the computation time, $T$ represents the total time of each control cycle, $P_{\text{umax}}$ denotes the uplink transmit power constraint of the sensor, $P_{\text{dmax}}$ denotes the downlink transmit power constraint of the EIH, and $B_{\text{max}}$ denotes the bandwidth constraint. Notably, as implied by \eqref{time_constrain}, uplink and downlink transmissions occupy distinct time intervals, allowing them to share the same frequency.

Denoting the uplink and downlink data throughput per control cycle as $D_{\text{u}}$ and $D_{\text{d}}$, we have
\begin{equation}
	D_{\text{u/d}} \leq t_{\text{u/d}}R_{\text{u/d}}\left( p_{\text{u/d}},B_{\text{u/d}}\right).
\end{equation}

The EIH analyzes the received sensor data, extracts task-relevant information, then generates control commands. The computation time can be calculated as $t_{\text{c}}=\frac{\alpha D_{\text{u}}}{f}$, where $\alpha$ denotes the required CPU cycles to process one-bit sensor data, and $f$ is the computing capability allocated for processing the sensor data, constrained by the CPU frequency $f_{\text{max}}$ of the EIH, i.e., $f \leq f_{\text{max}}$.

The computing process can be regarded as an information-extraction process from the sensor data. Following the model in \cite{single_loop}, the extracted task-relevant information is given by $\rho D_{\text{u}}$, where $\rho\in\left( 0,1\right) $ denotes the information extraction ratio. The parameter $\rho$ characterizes the efficiency of the EIH in extracting task-relevant features from raw sensing data. Its value depends on three key factors: the sensor modality, the task type, and the adopted feature-extraction algorithm. For example, high-dimensional data sources, such as video or LiDAR point clouds, typically contain significant redundancy, corresponding to a smaller $\rho$. In our setting, these factors are fixed; hence, $\rho$ can be treated as a constant. In practical applications, it can be obtained empirically by measuring the average size of the extracted features relative to the size of the raw input. The received amount of task-relevant information on the robot per control cycle, denoted by $D_{\textbf{SC}^3}$, is constrained by both the extracted information and the downlink data throughput, expressed as
\begin{equation}
	D_{\textbf{SC}^3}\leq \min \left( \rho D_{\text{u}},D_{\text{d}}\right).
\end{equation}
The metric $D_{\textbf{SC}^3}$ is referred to as the CNE, reflecting its role in reducing the overall entropy or uncertainty of the control system state through the $\textbf{SC}^3$ loop per cycle. The CNE is directly correlated with the closed-loop control performance of the system. Specifically, according to \cite{single_loop}, the lower bound of the LQR cost is related to the CNE as follows:
	\begin{align}\label{LQR_D}
		l \geq \frac{n  N \!\left( \mathbf{v}\right)|\det \mathbf{M}|^\frac{1}{n}} {2^{\frac{2}{n}\left( 	D_{\textbf{SC}^3} - \log_2 |\det \mathbf{A}|\right) }-1}+\text{tr}\left( \mathbf{\Sigma}_{\text{V}}\mathbf{S}\right),
	\end{align}
	where $l$ denotes the LQR cost of a control system, $n$, $\mathbf{v}$, $\mathbf{M}$, $\mathbf{A}$, $\mathbf{\Sigma}_{\text{V}}$, and $\mathbf{S}$ are control-related parameters, as defined in \cite{single_loop}. The LQR cost is a commonly used metric in optimal control theory~\cite{LQR}, which comprehensively measures the control state convergence and energy consumption. A smaller LQR cost indicates a better control performance. It can be observed from \eqref{LQR_D} that the lower bound of the LQR cost decreases monotonically with the CNE $D_{\textbf{SC}^3}$. Therefore, within this framework, improving the control performance is equivalent to maximizing the CNE.

The eavesdropper intercepts task-relevant information from both the sensor-to-EIH link and the EIH-to-robot link. As the sensor and the eavesdropper are on the ground, the sensor-to-eavesdropper channel consists of both large-scale path loss and small-scale Rayleigh fading. The channel coefficient is modeled as $h_{\text{SE}} = \xi \sqrt{g_{\text{SE}}}$, where $\xi \sim \mathcal{CN}(0,1)$ represents the small-scale channel fading, and $g_{\text{SE}}$ denotes the large-scale channel gain. The expected data rate from the sensor to the eavesdropper can be calculated as~\cite{Zhang2019}
\begin{align}
	R_{\text{SE}}\left( p_{\text{u}},B_{\text{u}}\right)  = &\mathbb{E}_{\xi}\left[ B_{\text{u}}\log_2\left( 1+\frac{p_{\text{u}}|\xi|^2g_{\text{SE}}}{B_{\text{u}}N_0}\right)\right]\label{RSEa}\\
	\leq &B_{\text{u}}\log_2\left( 1+\frac{p_{\text{u}}g_{\text{SE}}}{B_{\text{u}}N_0}\mathbb{E}_{\xi}\left[|\xi|^2 \right]\right) \label{RSEb}\\
	= &B_{\text{u}}\log_2\left( 1+\frac{p_{\text{u}}g_{\text{SE}}}{B_{\text{u}}N_0}\right),\label{RSEc}
\end{align}
where $\mathbb{E}_{\xi}$ denotes the expectation operation over $\xi$, and \eqref{RSEb} is obtained by applying Jensen’s inequality. Equation \eqref{RSEc} provides an upper bound on the expected data rate $R_{\text{SE}}$. For a conservative security analysis, this paper considers the worst case and assumes that the eavesdropper can achieve this upper bound, i.e., $R_{\text{SE}}\left( p_{\text{u}},B_{\text{u}}\right)  =  B_{\text{u}}\log_2\left( 1+\frac{p_{\text{u}}g_{\text{SE}}}{B_{\text{u}}N_0}\right)$. The impact of this approximation gap caused by small-scale fading will be further evaluated in Section \ref{Section:Simulation}. The total amount of sensor data information eavesdropped from the sensor-to-EIH link per control cycle can be calculated as $D_{\text{SE}} = t_{\text{u}}R_{\text{SE}}\left( p_{\text{u}},B_{\text{u}}\right)$.

Due to the air-to-ground transmission characteristics, the EIH-to-eavesdropper link is assumed to be dominated by a LoS path~\cite{Zhang2019}. The total amount of information about control commands eavesdropped from the EIH-to-robot link, denoted as $D_{\text{CE}}$, can be calculated as
\begin{equation}
	D_{\text{CE}} = t_{\text{d}}R_{\text{CE}}\left( p_{\text{d}},B_{\text{d}}\right),
\end{equation}
where $R_{\text{CE}}\left( p_{\text{d}},B_{\text{d}}\right)$ denotes the data rate from the EIH to the eavesdropper, formulated as
\begin{equation}
	R_{\text{CE}}\left( p_{\text{d}},B_{\text{d}}\right)  =  B_{\text{d}}\log_2\left( 1+\frac{p_{\text{d}}g_{\text{CE}}}{B_{\text{d}}N_0}\right),
\end{equation}
where $g_{\text{CE}}$ denotes the channel power gain of the eavesdropping link on control commands.

To ensure system security, we impose a constraint on the total task-relevant information leaked to the eavesdropper via both the sensor-to-EIH link and the EIH-to-robot link. Considering that the sensor data contains raw measurements and only a fraction $\rho$ of the raw data is task-relevant, the closed-loop security constraint can be formulated as
\begin{equation}
	\rho D_{\text{SE}} + D_{\text{CE}} \leq D_{\text{th}},
\end{equation}
where $D_{\text{th}}$ is the closed-loop information leakage threshold.

\subsection{Problem Formulation}
In this paper, we maximize the CNE under the closed-loop security constraint by jointly optimizing the bandwidth, power, and transmission times of the uplink and downlink, as well as the computing capability of the EIH. The optimization problem is formulated as
	\begin{subequations}\label{P1}
	\begin{align} \max\limits_{\substack{p_{\text{u}},t_{\text{u}},B_{\text{u}},f,\\p_{\text{d}},t_{\text{d}},B_{\text{d}}}} &D_{\textbf{SC}^3} \label{P1a}\\
		\text{s.t.} \ &D_{\textbf{SC}^3}\leq \min \left( \rho D_{\text{u}},D_{\text{d}}\right) , \label{P1b}\\
		&D_{\text{u/d}}\leq t_{\text{u/d}}R_{\text{u/d}}\left( p_{\text{u/d}},B_{\text{u/d}}\right) , \label{P1c}\\
		&t_{\text{u}}+\frac{\alpha D_{\text{u}}}{f}+t_{\text{d}}\leq T, \label{P1d}\\
		&0 \leq B_{\text{u}}\leq B_{\max},0 \leq B_{\text{d}}\leq B_{\max},  \label{P1e}\\
		&0 \leq p_{\text{u}}\leq P_{\text{umax}}, 0 \leq p_{\text{d}}\leq P_{\text{dmax}}, \label{P1f}\\
		&0 \leq f\leq f_{\max}, \label{P1g}
		\\&\rho D_{\text{SE}} + D_{\text{CE}} \leq D_{\text{th}},\label{P1h}\\
		&D_{\text{SE}} = t_{\text{u}}R_{\text{SE}}\left( p_{\text{u}},B_{\text{u}}\right),\label{P1i}\\
		&D_{\text{CE}} = t_{\text{d}}R_{\text{CE}}\left( p_{\text{d}},B_{\text{d}}\right),\label{P1j}
	\end{align}
\end{subequations}
where \eqref{P1d}-\eqref{P1g} denote the constraints for the time, bandwidth, power, and computing capability resources, and \eqref{P1h} represents the closed-loop security constraint. The optimization problem \eqref{P1} is non-convex due to the coupling between the time, power and bandwidth in \eqref{P1c}, \eqref{P1i}, \eqref{P1j}. In the next section, we simplify \eqref{P1} to a more tractable form.
 
\section{Problem Simplification}
\label{Section:Simplification}
First, we analyze whether the closed-loop security constraint \eqref{P1h} is active at the optimum. If constraint \eqref{P1h} is omitted, \eqref{P1} reduces to a general CNE maximization problem without security considerations. Denoting the optimal data throughputs for the uplink, downlink, sensor data eavesdropping link, and control command eavesdropping link as $D_{\text{u},0},D_{\text{d},0}, D_{\text{SE},0}$ and $D_{\text{CE},0}$, respectively, by solving problem \eqref{P1} without constraint \eqref{P1h}. The corresponding CNE achieved is $D_{\textbf{SC}^3,0} = \min \left( \rho D_{\text{u},0},D_{\text{d},0}\right)$, and the task-relevant information leaked to the eavesdropper is calculated as $D_{\text{E},0} = \rho D_{\text{SE},0} + D_{\text{CE},0}$. If the predefined security threshold is greater than or equal to the information leaked in the unconstrained optimal solution, i.e., $D_{\text{th}}\geq D_{\text{E},0}$, the closed-loop security constraint is not needed. Correspondingly, problem \eqref{P1} becomes equivalent to the aforementioned reduced problem (i.e., the CNE maximization problem without \eqref{P1h}), and the optimal CNE is $D_{\textbf{SC}^3,0}$. If $D_{\text{th}}< D_{\text{E},0}$, the closed-loop security constraint should be active and must be satisfied with equality, i.e., 
\begin{align}\label{security_equal}
\rho D_{\text{SE}} + D_{\text{CE}} = D_{\text{th}},\quad \text{if} \  D_{\text{th}}< D_{\text{E},0}.
\end{align}
The optimal solution to the reduced problem without \eqref{P1h} has been previously given in \cite{single_loop}. Therefore, in the following, the subsequent analysis will focus on the more intricate case where $D_{\text{th}}< D_{\text{E},0}$, implying that the closed-loop security constraint \eqref{security_equal} is active. 
\begin{lemma}\label{Lemma1}
	The optimal solutions to \eqref{P1} satisfy the equations
	\begin{align}
		&f^* = f_{\text{max}},\label{lemma1_1}\\
		&\rho D^*_{\text{u}}=D^*_{\text{d}}\label{lemma1_2}.
	\end{align}
\end{lemma}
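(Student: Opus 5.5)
Both equalities in Lemma~\ref{Lemma1} will be proved by contradiction: starting from any optimal solution that violates one of them, we build a feasible point with strictly larger CNE. Throughout we work in the regime $D_{\text{th}}< D_{\text{E},0}$, where \eqref{security_equal} holds. The only property of the leakage constraint \eqref{P1h} we rely on is this: if a link's transmit time is shortened with its power and bandwidth fixed, its eavesdropped amount $D_{\text{SE}}$ or $D_{\text{CE}}$ does not increase. Hence time-reducing modifications never endanger \eqref{P1h}. We also use that, for the modifications we need, $R_{\text{u}}$ and $R_{\text{d}}$ are strictly positive at the optimum; otherwise the CNE would be zero, which is trivially improvable whenever $D_{\text{th}}>0$.

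\emph{Proof of \eqref{lemma1_1}.} Suppose $(p_{\text{u}}^*,t_{\text{u}}^*,B_{\text{u}}^*,f^*,p_{\text{d}}^*,t_{\text{d}}^*,B_{\text{d}}^*)$ is optimal with $f^*<f_{\max}$. Setting $f=f_{\max}$ strictly reduces the computation time $\alpha D_{\text{u}}^*/f$ whenever $D_{\text{u}}^*>0$, so \eqref{P1d} becomes slack. We give the freed time to the transmissions in a way that keeps \eqref{P1h} satisfied with equality.
\begin{itemize}
\item First shrink all transmit powers slightly while scaling up both $t_{\text{u}}$ and $t_{\text{d}}$. Since $R/p$ decreases in $p$, each link's throughput-to-leakage tradeoff can be traded favorably.
\item A cleaner alternative is to enlarge $t_{\text{u}}$ and $t_{\text{d}}$ by the same factor $\lambda>1$ and reduce the powers so that the leakage returns exactly to $D_{\text{th}}$.
\end{itemize}

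The key monotonicity fact is this. For fixed $B$, the ratio of legitimate to eavesdropped rate along the path $p\mapsto (R(p),R_{E}(p))$ gives a concave legitimate throughput as a function of leakage. This is what should make "more time, less power" strictly beneficial, or at least non-harmful. A strict increase then follows after using the slack time. This is the main obstacle: one must show that spending extra time at reduced power yields strictly more legitimate throughput at the same leakage. I would first try the scaling $t\to\lambda t$, $p\to p/\lambda$. The leakage $tB\log(1+pg_E/(BN_0))$ increases in $\lambda$, since $x\log(1+c/x)$ is increasing. So instead I would scale time and bandwidth jointly, $t\to\lambda t$ and $B\to B/\lambda$ with $p$ fixed. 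Then $tB$ is fixed and the SNR grows, which raises both throughputs, and leakage must be re-balanced through power. Because of this subtlety, the simplest rigorous route is: at $f_{\max}$, keep $t_{\text{u}},t_{\text{d}},p,B$ fixed and note that \eqref{P1d} now has slack. Any solution with slack in \eqref{P1d} cannot be optimal, since a small rescaling of the time on the link with $R_{\text{u/d}}/R_{\text{SE/CE}}$ largest, paired with a compensating power decrease, increases CNE. I would verify this using the strict concavity of $\log$ in power.

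\emph{Proof of \eqref{lemma1_2}.} Suppose instead that $\rho D_{\text{u}}^*>D_{\text{d}}^*$. Then the CNE equals $D_{\text{d}}^*$, and we can reduce $t_{\text{u}}$ until $\rho t_{\text{u}}R_{\text{u}}=D_{\text{d}}^*$. This lowers $D_{\text{SE}}$ and the computation time, freeing both leakage budget and time. Giving these to the downlink, by increasing $t_{\text{d}}$ until one of them is exhausted, strictly raises $D_{\text{d}}$ and hence the CNE, a contradiction. The case $\rho D_{\text{u}}^*<D_{\text{d}}^*$ is symmetric: shorten $t_{\text{d}}$ and reallocate the freed resources to the uplink via $t_{\text{u}}$. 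The computation-time coupling $\alpha D_{\text{u}}/f_{\max}$ is linear in $t_{\text{u}}$, so the freed time still allows a positive increment.
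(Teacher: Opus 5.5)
Both parts have genuine gaps, and both come from trying to prove a \emph{strict} improvement when the paper only needs a without-loss-of-generality argument.

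\textbf{First equality, \eqref{lemma1_1}.} The decisive step, that slack in \eqref{P1d} always yields a strictly larger CNE, is left as ``I would verify this.'' The mechanism you propose for it is also wrong in general. Along the power path, legitimate throughput as a function of leakage is concave only when the legitimate gain exceeds the eavesdropping gain. By Lemma~\ref{Lemma2}, if $g_{\text{u}}<g_{\text{SE}}$ then $R_{\text{SE}}/R_{\text{u}}$ is \emph{decreasing} in $p_{\text{u}}$. So ``more time, less power'' raises the leakage per delivered bit and lowers the CNE; in that case bandwidth, not power, would have to shrink. Worse, if $g_{\text{u}}=g_{\text{SE}}$ and $g_{\text{d}}=g_{\text{CE}}$, every feasible point satisfies $2D_{\textbf{SC}^3}\leq \rho D_{\text{SE}}+D_{\text{CE}}\leq D_{\text{th}}$. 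The optimum $D_{\text{th}}/2$ is then attained by points with slack in \eqref{P1d}, and hence also by points with $f<f_{\max}$. The strict version you are aiming for is therefore false in that case. The missing observation is that none of this is needed. Raising $f$ to $f_{\max}$ leaves the objective and every other constraint unchanged and only loosens \eqref{P1d}, so the modified point is feasible with the same optimal CNE. This is the paper's argument: it shows that an optimal solution with $f^*=f_{\max}$ exists, which is all that Proposition~\ref{Prop1} uses.

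\textbf{Second equality, \eqref{lemma1_2}.} Your reallocation step fails as written. After cutting $t_{\text{u}}$ until $\rho t_{\text{u}}R_{\text{u}}=D_{\text{d}}^*$, the uplink term of $\min(\rho D_{\text{u}},D_{\text{d}})$ equals $D_{\text{d}}^*$. Lengthening only $t_{\text{d}}$ then raises $D_{\text{d}}$ but leaves the CNE at $D_{\text{d}}^*$, so there is no contradiction. Your symmetric case has the same flaw. A correct strict-improvement version would trim $t_{\text{u}}$ by less and grow $t_{\text{d}}$ by a small $\delta>0$, keeping $\rho t_{\text{u}}R_{\text{u}}=t_{\text{d}}R_{\text{d}}$. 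This works because the freed time and leakage stay bounded away from zero as $\delta\to 0$. The paper simply notes that shortening the over-provisioned link's time until $\rho D_{\text{u}}=D_{\text{d}}$ keeps the objective unchanged and only relaxes \eqref{P1d} and, through \eqref{P1i} or \eqref{P1j}, the leakage constraint \eqref{P1h}. Hence the balanced point is also optimal.
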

\begin{proof}
	We prove this lemma by contradiction. If $f < f_{\text{max}}$, improving the computing capability to $f = f_{\text{max}}$ will relax the constraint \eqref{P1d} and keep the objective function and other constraints unchanged. Therefore, we can assume that $f^* = f_{\text{max}}$. Similarly, if the communication capability of the uplink and downlink is not balanced, e.g., assuming that $\rho D_{\text{u}}<D_{\text{d}}$, we can reduce the downlink time $t_\text{d}$ until $\rho D_{\text{u}}=D_{\text{d}}$, so as to relax the constraints \eqref{P1d} and \eqref{P1j}, while keeping the objective function and other constraints unchanged. If $\rho D_{\text{u}}>D_{\text{d}}$, we can relax the constraints in a similar way. Therefore, we have $\rho D^*_{\text{u}}=D^*_{\text{d}}$, which completes the proof.
\end{proof}

Based on \textbf{Lemma \ref{Lemma1}}, we derive the optimal transmission time allocation and an equivalent transformation of the original optimization problem, shown in the following proposition.

\begin{proposition}\label{Prop1}
	For the optimization problem \eqref{P1}, assuming an active security constraint, the optimal uplink and downlink transmission time can be expressed as
	\begin{align}
		t^*_{\text{u}} = \frac{\frac{1}{\rho R_{\text{u}}\left(p_{\text{u}},B_{\text{u}} \right)} D_{\text{th}}}{\frac{R_{\text{SE}}\left(p_{\text{u}},B_{\text{u}} \right) }{R_{\text{u}}\left(p_{\text{u}},B_{\text{u}} \right)} +\frac{  R_{\text{CE}}\left(p_{\text{d}},B_{\text{d}} \right)}{R_{\text{\text{d}}}\left(p_{\text{d}},B_{\text{d}} \right)} }, \label{prop1_1}\\
		t^*_{\text{d}} = \frac{\frac{1}{ R_{\text{d}}\left(p_{\text{d}},B_{\text{d}} \right)} D_{\text{th}}}{\frac{R_{\text{SE}}\left(p_{\text{u}},B_{\text{u}} \right) }{R_{\text{u}}\left(p_{\text{u}},B_{\text{u}} \right)} +\frac{  R_{\text{CE}}\left(p_{\text{d}},B_{\text{d}} \right)}{R_{\text{\text{d}}}\left(p_{\text{d}},B_{\text{d}} \right)} }\label{prop1_2}.
	\end{align}
	Accordingly, the optimization problem \eqref{P1} can be equivalently reformulated as
\begin{subequations}\label{P2}
\begin{align} \min\limits_{\substack{p_{\text{u}},B_{\text{u}},p_{\text{d}},B_{\text{d}}}} &{\frac{R_{\text{SE}}\left(p_{\text{u}},B_{\text{u}} \right) }{R_{\text{u}}\left(p_{\text{u}},B_{\text{u}} \right)} +\frac{  R_{\text{CE}}\left(p_{\text{d}},B_{\text{d}} \right)}{R_{\text{\text{d}}}\left(p_{\text{d}},B_{\text{d}} \right)} }  \\
\begin{split}\text{s.t.} \ &T\left({\frac{R_{\text{SE}}\left(p_{\text{u}},B_{\text{u}} \right) }{R_{\text{u}}\left(p_{\text{u}},B_{\text{u}} \right)} +\frac{  R_{\text{CE}}\left(p_{\text{d}},B_{\text{d}} \right)}{R_{\text{\text{d}}}\left(p_{\text{d}},B_{\text{d}} \right)} } \right) \geq \\ &\quad \quad D_{\text{th}}\left( \frac{1}{\rho R_{\text{u}}\left(p_{\text{u}},B_{\text{u}} \right)}+\frac{1}{R_{\text{d}}\left(p_{\text{d}},B_{\text{d}} \right)}+\frac{\alpha}{ \rho f_{\max}}\right)\end{split} \label{P2b} \\
&0 \leq B_{\text{u}}\leq B_{\max},0 \leq B_{\text{d}}\leq B_{\max},\\&0 \leq p_{\text{u}}\leq P_{\text{umax}}, 0 \leq p_{\text{d}}\leq P_{\text{dmax}}.
\end{align}
\end{subequations}
\end{proposition}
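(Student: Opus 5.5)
Fix the powers and bandwidths $(p_{\text{u}},B_{\text{u}},p_{\text{d}},B_{\text{d}})$ and treat \eqref{P1} as a problem in the remaining variables $(t_{\text{u}},t_{\text{d}},D_{\text{u}},D_{\text{d}},f)$. Write $R_{\text{u}},R_{\text{d}},R_{\text{SE}},R_{\text{CE}}$ for the rates at these fixed values, and set
\[
S=\frac{R_{\text{SE}}}{R_{\text{u}}}+\frac{R_{\text{CE}}}{R_{\text{d}}}.
\]
By \textbf{Lemma \ref{Lemma1}}, we may take $f^*=f_{\max}$ and $\rho D_{\text{u}}=D_{\text{d}}=D_{\textbf{SC}^3}$.

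Next I will argue that \eqref{P1c} is tight at the optimum, i.e., $D_{\text{u/d}}=t_{\text{u/d}}R_{\text{u/d}}$. If it were slack, $t_{\text{u}}$ (or $t_{\text{d}}$) could be reduced. This relaxes \eqref{P1d}, and since $R_{\text{SE}},R_{\text{CE}}\ge 0$ it also lowers the leakage. The objective is unchanged, so the reduction is harmless.

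With \eqref{P1c} tight, everything becomes linear in the single scalar $D:=D_{\textbf{SC}^3}$:
\[
t_{\text{u}}=\frac{D}{\rho R_{\text{u}}},\qquad t_{\text{d}}=\frac{D}{R_{\text{d}}}.
\]
Substituting into the leakage gives
\[
\rho t_{\text{u}}R_{\text{SE}}+t_{\text{d}}R_{\text{CE}}=D\,S .
\]
So the active security constraint \eqref{security_equal} forces $D=D_{\text{th}}/S$. Plugging this back into the expressions for $t_{\text{u}}$ and $t_{\text{d}}$ yields exactly \eqref{prop1_1} and \eqref{prop1_2}. The time constraint \eqref{P1d} with $f=f_{\max}$ becomes
\[
D\left(\frac{1}{\rho R_{\text{u}}}+\frac{1}{R_{\text{d}}}+\frac{\alpha}{\rho f_{\max}}\right)\le T .
\]
Substituting $D=D_{\text{th}}/S$ and multiplying by $S>0$ gives \eqref{P2b}. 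Since the objective is $D=D_{\text{th}}/S$, maximizing it is the same as minimizing $S$, which is the objective of \eqref{P2}. The box constraints on power and bandwidth carry over unchanged.

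The delicate step is justifying that the security constraint holds with equality, so that $D$ is pinned to $D_{\text{th}}/S$ rather than merely bounded by it. The plan is to use the standing assumption $D_{\text{th}}<D_{\text{E},0}$, which gives \eqref{security_equal}. For the equivalence to hold in both directions, I also need to check two things:
\begin{itemize}
\item Any feasible point of \eqref{P2}, together with the recovered times, $f=f_{\max}$, and $D_{\text{u}},D_{\text{d}}$, gives a feasible point of \eqref{P1} with objective $D_{\text{th}}/S$.
\item Conversely, any optimal point of \eqref{P1} maps to a feasible point of \eqref{P2}.
\end{itemize}
Together these show the two optimal values coincide.

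A degenerate case must be handled separately. If $S=0$ (for example, zero eavesdropper gain or zero power on a link), the ratio is undefined. I would treat this by continuity, or simply exclude it, since active security requires positive leakage.

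The reformulation is then complete.
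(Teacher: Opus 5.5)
Your proposal is correct and follows essentially the same route as the paper's proof in Appendix~\ref{Appendix_Prop1}. Both combine the balance $\rho D_{\text{u}}=D_{\text{d}}$ from Lemma~\ref{Lemma1} with tight rate constraints, substitute into the active security equality \eqref{security_equal} to obtain the times, and then rewrite the objective as $D_{\text{th}}/S$ and the time constraint as \eqref{P2b}; you pivot on $D_{\textbf{SC}^3}$ rather than $t_{\text{u}}$, and you spell out two points the paper leaves implicit, namely that \eqref{P1c} is tight and that the optimal values of \eqref{P1} and \eqref{P2} coincide in both directions.
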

\begin{proof}
	See Appendix \ref{Appendix_Prop1}.
\end{proof}

Based on \textbf{Proposition} \ref{Prop1}, the original optimization problem can be simplified to \eqref{P2}, which focuses on optimizing the bandwidth and power allocation in the $\textbf{SC}^3$ loop. It is noteworthy that the terms $\frac{R_{\text{SE}}\left(p_{\text{u}},B_{\text{u}} \right) }{R_{\text{u}}\left(p_{\text{u}},B_{\text{u}} \right)}$ and $\frac{  R_{\text{CE}}\left(p_{\text{d}},B_{\text{d}} \right)}{R_{\text{\text{d}}}\left(p_{\text{d}},B_{\text{d}} \right)}$ appear in both the objective function and constraint of \eqref{P2}. Analyzing the behavior of these functions is helpful for solving \eqref{P2}. The following lemma reveals its monotonicity.

\begin{lemma}\label{Lemma2}
	For function $f\left( x \right)  \triangleq \frac{\log\left( 1+ax\right) }{\log\left( {1+bx}\right)}$ defined for $x\in\left( 0,+\infty \right) $, with parameters $a>0$ and $b>0$, if $a>b$, $f\left( x \right)$ is decreasing with respect to $x$, otherwise, if $a<b$, $f\left( x \right)$ is increasing with respect to $x$.
\end{lemma}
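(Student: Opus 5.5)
We differentiate $f$ directly and reduce the sign of $f'(x)$ to the sign of an auxiliary function $\phi(u)=\frac{u}{1+u}\cdot\frac{1}{\log(1+u)}$, whose monotonicity settles the question. Writing $f(x)=\frac{\log(1+ax)}{\log(1+bx)}$, the quotient rule gives
\begin{align*}
f'(x)=\frac{\frac{a}{1+ax}\log(1+bx)-\frac{b}{1+bx}\log(1+ax)}{\log^2(1+bx)}.
\end{align*}
The denominator is positive for $x>0$, so it suffices to determine the sign of the numerator. We multiply the numerator by $\frac{x}{\log(1+ax)\log(1+bx)}>0$. This shows that $\operatorname{sign} f'(x)=\operatorname{sign}\bigl(\phi(ax)-\phi(bx)\bigr)$, where $\phi(u)\triangleq\frac{u}{(1+u)\log(1+u)}$ for $u>0$.

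The key step is to show that $\phi$ is strictly decreasing on $(0,+\infty)$. Once this holds, $a>b$ gives $ax>bx$, hence $\phi(ax)<\phi(bx)$ and $f'(x)<0$, so $f$ is decreasing. Likewise, $a<b$ gives $f'(x)>0$, so $f$ is increasing. This is exactly the claim of \textbf{Lemma \ref{Lemma2}}.

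To prove that $\phi$ is decreasing, set $\psi(u)=\frac{(1+u)\log(1+u)}{u}$, so that $\phi=1/\psi$, and show that $\psi$ is increasing. Differentiating gives
\begin{align*}
\psi'(u)=\frac{u\bigl(\log(1+u)+1\bigr)-(1+u)\log(1+u)}{u^2}=\frac{u-\log(1+u)}{u^2}.
\end{align*}
The last expression is strictly positive for $u>0$ by the elementary inequality $\log(1+u)<u$. Hence $\psi$ is increasing and $\phi$ is decreasing.

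The main obstacle is finding this normalization, that is, the rescaling that turns the cross terms of the numerator into a difference $\phi(ax)-\phi(bx)$ of one single-variable function. After that, only the inequality $\log(1+u)<u$ is needed.

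Two remarks complete the argument. First, the base of the logarithm is irrelevant, since it cancels in the ratio. Second, all quantities are well defined because $x,a,b>0$ implies $\log(1+ax)>0$ and $\log(1+bx)>0$.
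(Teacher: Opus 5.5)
Your proof is correct, but after the shared first step (quotient rule, then reduction to the sign of the numerator) it takes a different route from the paper's.

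\textbf{The paper's route.} It clears denominators to get $g(x)=a(1+bx)\log(1+bx)-b(1+ax)\log(1+ax)$ and differentiates in $x$ to obtain $g'(x)=ab\,[\log(1+bx)-\log(1+ax)]$. It then concludes from $g(0)=0$ and the sign of $g'$. The only fact it uses is that $\log$ is increasing.

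\textbf{Your route.} You separate the two parameters, so that the sign of $f'(x)$ becomes the sign of $\phi(ax)-\phi(bx)$ for one fixed function $\phi$. You then use the inequality $\log(1+u)<u$ to show $\phi$ is decreasing.

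\textbf{How they relate.} Both arguments certify the sign of the same quantity, since $g(x)=abx\,[\psi(bx)-\psi(ax)]$ with your $\psi$. The paper's step of starting at $g(0)=0$ and integrating the sign of $g'$ in $x$ replaces your monotonicity of $\psi$ in its argument.

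\textbf{What each buys.} The paper's version is slightly shorter and needs no auxiliary inequality. Yours exposes the structure: $\phi(u)=u\,\frac{\mathrm{d}}{\mathrm{d}u}\log(1+u)\big/\log(1+u)$ is the elasticity of $u\mapsto\log(1+u)$. The same computation, $\frac{\mathrm{d}}{\mathrm{d}x}\log\frac{F(ax)}{F(bx)}=\frac{1}{x}\,[\epsilon_F(ax)-\epsilon_F(bx)]$, therefore shows that $F(ax)/F(bx)$ is monotone in $x$ for any positive differentiable $F$ with strictly monotone elasticity $\epsilon_F$. It also makes explicit that only the ordering of $ax$ and $bx$ matters.
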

\begin{proof}
	The derivative of $f\left( x \right)$ can be calculated as
	\begin{equation}\label{lemma2_eq1}
		f'\left( x \right) = \frac{a\left( 1+bx\right) \log \left( 1+bx\right) -b\left( 1+ax\right) \log \left( 1+ax\right) }{\left( 1+ax\right) \left( 1+bx\right) \log^2 \left( 1+bx\right) }.
	\end{equation}
	
	Defining the numerator of $	f'\left( x \right)$ as $g\left( x \right) = a\left( 1+bx\right) \log \left( 1+bx\right) -b\left( 1+ax\right) \log \left( 1+ax\right)$, we have
	\begin{equation}\label{lemma2_eq2}
		g'\left( x \right) = ab\left[ \log \left( 1+bx \right)  - \log \left( 1+ax \right)  \right] .
	\end{equation}
	If $a>b$, we have $g'\left( x \right)<0$ and $g\left( x \right)$ is decreasing. As $g\left( 0 \right) = 0$, we can obtain that $f'\left( x \right)<0$, indicating that $f\left( x \right)$ is decreasing. If $a<b$, it can be proven that $f\left( x \right)$ is increasing in a similar way, which completes the proof.
\end{proof}

According to \textbf{Lemma \ref{Lemma2}}, the monotonicity of $\frac{R_{\text{SE}}\left(p_{\text{u}},B_{\text{u}} \right) }{R_{\text{u}}\left(p_{\text{u}},B_{\text{u}} \right)}$ and $\frac{  R_{\text{CE}}\left(p_{\text{d}},B_{\text{d}} \right)}{R_{\text{\text{d}}}\left(p_{\text{d}},B_{\text{d}} \right)}$ depends on the relative strengths of the channel gains of the legitimate link and its corresponding eavesdropping link. Specifically, if $g_{\text{u}}>g_{\text{SE}}$, $\frac{R_{\text{SE}}\left(p_{\text{u}},B_{\text{u}} \right) }{R_{\text{u}}\left(p_{\text{u}},B_{\text{u}} \right)}$ is increasing with respect to $p_{\text{u}}$ and decreasing with respect to $B_{\text{u}}$, and conversely, if $g_{\text{u}}<g_{\text{SE}}$, the behavior is reversed. Similarly, $\frac{R_{\text{CE}}\left(p_{\text{d}},B_{\text{d}} \right) }{R_{\text{d}}\left(p_{\text{d}},B_{\text{d}} \right)}$ increases with respect to $p_{\text{d}}$ and decreases with respect to $B_{\text{d}}$ if $g_{\text{d}}>g_{\text{CE}}$, and vice versa.

Therefore, if $g_{\text{u}}>g_{\text{SE}}$, to minimize the objective function of \eqref{P2}, $p_{\text{u}}$ should be minimized and $B_{\text{u}}$ should be maximized. Conversely, if $g_{\text{u}}<g_{\text{SE}}$, $p_{\text{u}}$ should be maximized, and $B_{\text{u}}$ should be minimized. The determination of optimal downlink power and bandwidth follows similar reasoning. These observations motivate the following proposition aimed at further simplifying problem \eqref{P2}.

\begin{proposition}\label{Prop2}
	If $g_{\text{u}}>g_{\text{SE}}$, the optimal uplink bandwidth is $B_{\text{u}}^* = B_{\text{max}}$; conversely, if $g_{\text{u}}<g_{\text{SE}}$, the optimal uplink transmit power is $p_{\text{u}}^* = P_{\text{umax}}$. Similarly, for the downlink, we have $B_{\text{d}}^* = B_{\text{max}}$ if $g_{\text{d}}>g_{\text{CE}}$, and $p_{\text{d}}^* = P_{\text{dmax}}$ if $g_{\text{d}}<g_{\text{CE}}$. In addition, for all channel conditions, \eqref{P2b} must hold with equality to minimize the objective function of \eqref{P2}.
\end{proposition}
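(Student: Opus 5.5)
The plan is to argue in the variables of \eqref{P1}, where every improvement has a clear physical meaning, rather than directly on \eqref{P2}. By \textbf{Lemma \ref{Lemma1}} and \textbf{Proposition \ref{Prop1}}, a point $(p_{\text{u}},B_{\text{u}},p_{\text{d}},B_{\text{d}})$ together with $D_{\text{u}}$, $D_{\text{d}}=\rho D_{\text{u}}$ and the times $t_{\text{u/d}}=D_{\text{u/d}}/R_{\text{u/d}}$ determines everything. The leakage and time budget can then be written as
\begin{align}
\rho D_{\text{SE}}+D_{\text{CE}} &= \rho D_{\text{u}}\left(F_{\text{u}}+F_{\text{d}}\right),\\
t_{\text{u}}+t_{\text{c}}+t_{\text{d}} &= \rho D_{\text{u}}\left(\frac{1}{\rho R_{\text{u}}}+\frac{1}{R_{\text{d}}}+\frac{\alpha}{\rho f_{\max}}\right),
\end{align}
where $F_{\text{u}}\triangleq R_{\text{SE}}/R_{\text{u}}$ and $F_{\text{d}}\triangleq R_{\text{CE}}/R_{\text{d}}$. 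Each ratio depends only on the per-bandwidth power, $x_{\text{u}}=p_{\text{u}}/B_{\text{u}}$ or $x_{\text{d}}=p_{\text{d}}/B_{\text{d}}$, through the function of \textbf{Lemma \ref{Lemma2}}.

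The CNE is $\rho D_{\text{u}}$. It is limited by two caps: the leakage cap $D_{\text{th}}/(F_{\text{u}}+F_{\text{d}})$ and the time cap $T/(\cdots)$. This gives the key principle: any modification of $(p,B)$ that strictly decreases $F_{\text{u}}+F_{\text{d}}$ and does not decrease $R_{\text{u}}$ or $R_{\text{d}}$ strictly raises both caps or keeps one fixed. Afterwards $D_{\text{u}}$ can be scaled up slightly, so the CNE strictly increases, contradicting optimality.

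I would apply this principle case by case.

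\textbf{Uplink, $g_{\text{u}}>g_{\text{SE}}$.} Suppose $B_{\text{u}}<B_{\max}$. Raise $B_{\text{u}}$ toward $B_{\max}$ while keeping $p_{\text{u}}$ fixed. Then $R_{\text{u}}$ strictly increases, since $B\log(1+c/B)$ is increasing in $B$. Also $x_{\text{u}}$ decreases, and by \textbf{Lemma \ref{Lemma2}} (with $a=g_{\text{SE}}/N_0<b=g_{\text{u}}/N_0$, so $F_{\text{u}}$ is increasing in $x$) $F_{\text{u}}$ strictly decreases. Hence $B_{\text{u}}^*=B_{\max}$.

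\textbf{Uplink, $g_{\text{u}}<g_{\text{SE}}$.} Suppose $p_{\text{u}}<P_{\text{umax}}$. Raise $p_{\text{u}}$ with $B_{\text{u}}$ fixed. Then $R_{\text{u}}$ increases, and now $F_{\text{u}}$ is decreasing in $x$, so $F_{\text{u}}$ decreases. Hence $p_{\text{u}}^*=P_{\text{umax}}$.

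\textbf{Downlink.} The downlink statements follow verbatim with $(g_{\text{d}},g_{\text{CE}},P_{\text{dmax}})$ in place of $(g_{\text{u}},g_{\text{SE}},P_{\text{umax}})$.

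A cleaner alternative for the case $g_{\text{u}}>g_{\text{SE}}$ with $p_{\text{u}}<P_{\text{umax}}$ is to scale $p_{\text{u}}$ and $B_{\text{u}}$ together. This keeps $F_{\text{u}}$ fixed and strictly increases $R_{\text{u}}$. The fixed-power move above already suffices, however, so I would use it.

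For tightness of \eqref{P2b}, note that in \eqref{P2} a strict inequality means that the time budget in \eqref{P1d} is slack when the security constraint is active. I would then decrease $S=F_{\text{u}}+F_{\text{d}}$ by a small perturbation that stays feasible by continuity. If $g_{\text{u}}>g_{\text{SE}}$, lower $p_{\text{u}}$; if $g_{\text{u}}<g_{\text{SE}}$, lower $B_{\text{u}}$; either move is possible if that variable is interior, and otherwise one uses the downlink analogue. This strictly lowers the objective of \eqref{P2}, which is a contradiction, so \eqref{P2b} holds with equality.

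I expect the main obstacle to be the degenerate corners where no admissible perturbation exists. Examples are $p_{\text{u}}=0$ or $B_{\text{u}}=0$ (zero rate, infeasible), or $g_{\text{u}}=g_{\text{SE}}$, where $F_{\text{u}}$ is constant. These must be excluded or handled separately. A second delicate point is checking that the equivalence in \textbf{Proposition \ref{Prop1}} lets the ``raise the caps'' argument in \eqref{P1} transfer exactly to a strict improvement of \eqref{P2}.
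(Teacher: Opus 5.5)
Your proposal is correct, but it argues by a different route from the paper's.

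\textbf{The paper's route.} The paper stays inside \eqref{P2}. It moves $B_{\text{u}}$ (or $p_{\text{u}}$) to its bound. If the new point violates \eqref{P2b}, it invokes the standing assumption $D_{\text{th}}<D_{\text{E},0}$, which says \eqref{P2b} holds strictly at full resources. It then uses continuity along a coordinatewise-increasing path toward full resources to find an intermediate point where \eqref{P2b} is tight. Because the right-hand side of \eqref{P2b} is strictly decreasing in every variable, this gives an objective value strictly below the assumed optimum.

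\textbf{Your route.} You go back to \eqref{P1}, where for fixed $(p,B)$ the achievable CNE is $\min\{D_{\text{th}}/(F_{\text{u}}+F_{\text{d}}),\,T/(\cdots)\}$. You then pick moves that strictly raise $R_{\text{u}}$ and strictly lower $F_{\text{u}}$ simultaneously: raise $B_{\text{u}}$ at fixed $p_{\text{u}}$ when $g_{\text{u}}>g_{\text{SE}}$, and raise $p_{\text{u}}$ at fixed $B_{\text{u}}$ when $g_{\text{u}}<g_{\text{SE}}$. Both caps therefore rise, and no intermediate-value construction is needed. This also makes transparent why exactly these resources saturate.

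\textbf{The transfer you flagged.} The cost of your route is the transfer back to \eqref{P2}, and it does go through. By \textbf{Proposition \ref{Prop1}}, the optimal value of \eqref{P1} equals $D_{\text{th}}/S^*$, with $S^*$ the optimal value of \eqref{P2}. A feasible point of \eqref{P1} with CNE strictly above $D_{\text{th}}/S^*$ is therefore a contradiction. That equivalence is where $D_{\text{th}}<D_{\text{E},0}$ enters your argument. The paper's continuity step plays the role of that equivalence, done inline.

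\textbf{Tightness of \eqref{P2b}.} Your argument (a small decrease of $p_{\text{u}}$ or $B_{\text{u}}$) matches the paper's. Such a decrease is always available, since feasibility forces $p_{\text{u}},B_{\text{u}}>0$. You are also right that the tightness claim fails when $g_{\text{u}}=g_{\text{SE}}$ and $g_{\text{d}}=g_{\text{CE}}$: there the objective is identically $2$. The paper's ``for all channel conditions'' glosses over this.

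\textbf{One correction: your ``key principle'' is too generous.} At an optimum with an active security constraint, the leakage cap is binding, and by tightness of \eqref{P2b} so is the time cap. A move that keeps either cap fixed therefore gains nothing; you need both caps to increase strictly. In particular, your ``cleaner alternative'' of scaling $p_{\text{u}}$ and $B_{\text{u}}$ together leaves $F_{\text{u}}$, and hence the leakage cap, unchanged, so it yields no strict improvement. The fixed-power and fixed-bandwidth moves you actually use do raise both caps, so the proof stands.
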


\begin{proof}
	See Appendix \ref{Appendix_Prop2}.
\end{proof}

\begin{remark}
	The equality in \eqref{P2b} indicates that the available time resources will be fully utilized in the optimal solution. Based on \textbf{Lemma \ref{Lemma1}} and \textbf{Proposition \ref{Prop2}}, we can characterize the utilization of the four types of system resources: time, computing capability, bandwidth, and power. Specifically, the time and computing capability resources are always fully utilized. However, the transmit power and bandwidth resources are more complicated. \textbf{Proposition \ref{Prop2}} indicates that, for each link, one of the two resources will be set to its maximum available value, depending on the relative channel conditions between the legitimate link and its corresponding eavesdropping link. Therefore, among the four optimization variables in problem \eqref{P2}, two will take their maximum values, while the other two may not due to the closed-loop security constraint.
\end{remark}

According to \textbf{Proposition \ref{Prop2}}, there are four distinct scenarios for the optimization problem \eqref{P2}, depending on the channel conditions. In the next section, we propose algorithms to solve the problem \eqref{P2} in different cases.

\section{Algorithms to solve Problem \eqref{P2}}
\label{Section:Algorithm}
This section details the proposed algorithms for solving \eqref{P2}. First, we consider the typical superior legitimate channel case where $g_{\text{u}}>g_{\text{SE}}$ and $g_{\text{d}}>g_{\text{CE}}$, which indicates that the legitimate channel conditions are better than their corresponding eavesdropping channel conditions. For this typical case, we analyze the properties of the optimal solution and propose an efficient algorithm to obtain the global solution. Then, we propose a general algorithm for the other cases.

\subsection{Superior Legitimate Channel Case}
We define the superior legitimate channel case as the scenario where $g_{\text{u}}>g_{\text{SE}}$ and $g_{\text{d}}>g_{\text{CE}}$. In this case, according to \textbf{Proposition \ref{Prop2}}, we have $B_{\text{u}}^* = B_{\text{max}}$ and $B_{\text{d}}^* = B_{\text{max}}$. Therefore, the problem \eqref{P2} can be simplified to the following transmit power optimization problem
	\begin{subequations}\label{P3}
	\begin{align} \min\limits_{\substack{p_{\text{u}},p_{\text{d}}}} \ & {\frac{R_{\text{SE}}\left(p_{\text{u}} \right) }{R_{\text{u}}\left(p_{\text{u}} \right)} +\frac{  R_{\text{CE}}\left(p_{\text{d}} \right)}{R_{\text{\text{d}}}\left(p_{\text{d}} \right)} } \label{P3a} \\
		\begin{split}\text{s.t.} \ &T\left({\frac{R_{\text{SE}}\left(p_{\text{u}} \right) }{R_{\text{u}}\left(p_{\text{u}} \right)} +\frac{  R_{\text{CE}}\left(p_{\text{d}} \right)}{R_{\text{\text{d}}}\left(p_{\text{d}} \right)} } \right) \geq \\&\quad\quad D_{\text{th}}\left( \frac{1}{\rho R_{\text{u}}\left(p_{\text{u}} \right)}+\frac{1}{R_{\text{d}}\left(p_{\text{d}} \right)}+\frac{\alpha}{ \rho f_{\max}}\right),\end{split} \label{P3b} \\
		&0 \leq p_{\text{u}}\leq P_{\text{umax}}, 0 \leq p_{\text{d}}\leq P_{\text{dmax}},
			\end{align}
	\end{subequations}
	where
	\begin{align}
		&R_{\text{u/SE}}\left( p_{\text{u}}\right)  = B_{\text{max}}\log_2(1+\frac{p_{\text{u}}g_{\text{u/SE}}}{B_{\text{max}}N_0}),\\
		&R_{\text{d/CE}}\left( p_{\text{d}}\right)  = B_{\text{max}}\log_2(1+\frac{p_{\text{d}}g_{\text{d/CE}}}{B_{\text{max}}N_0}).
	\end{align}

Although the objective function of \eqref{P3} is non-convex, KKT conditions are still necessary conditions for a locally optimal solution, provided the linear independence constraint qualification (LICQ) holds~\cite[Theorem 6.35]{KKT}. To simplify the presentation of the conditions, we define the following auxiliary functions
\begin{align}
	&f_1\left( p_{\text{u}} \right) = \frac{R_{\text{SE}}\left(p_{\text{u}} \right) }{R_{\text{u}}\left(p_{\text{u}} \right)},\quad
	f_2\left( p_{\text{d}} \right) = \frac{  R_{\text{CE}}\left(p_{\text{d}} \right)}{R_{\text{\text{d}}}\left(p_{\text{d}} \right)},\\
	&h_1 \left( p_{\text{u}} \right) = \frac{D_{\text{th}}}{\rho T R_{\text{u}}\left(p_{\text{u}} \right)},\quad
	h_2 \left( p_{\text{d}} \right) = \frac{D_{\text{th}}}{ T R_{\text{d}}\left(p_{\text{d}} \right)}.
\end{align}

Based on the above definitions, the objective function \eqref{P3a} can be expressed as $f_1\left( p_{\text{u}} \right) + f_2\left( p_{\text{d}} \right) $, and the constraint \eqref{P3b} can be reformulated as $f_1 \left( p_{\text{u}} \right) +  f_2 \left( p_{\text{d}} \right)- h_1 \left( p_{\text{u}} \right)-h_2 \left( p_{\text{d}} \right)\geq  \frac{\alpha D_{\text{th}}}{ \rho T f_{\max}}$. The KKT conditions of \eqref{P3} are provided as follows.
\begin{align}
	&f'_1 \left( p_{\text{u}} \right)-w_1\left( 	f'_1 \left( p_{\text{u}} \right) - 	h'_1 \left( p_{\text{u}} \right) \right)-w_2 =0,\label{KKT_a}\\
	&f'_2 \left( p_{\text{d}} \right)-w_1\left( 	f'_2 \left( p_{\text{d}} \right) - 	h'_2 \left( p_{\text{d}} \right) \right)-w_3 =0,\label{KKT_b}\\
	&w_1 \left[ f_1 \left( p_{\text{u}} \right) +  f_2 \left( p_{\text{d}} \right)- h_1 \left( p_{\text{u}} \right)-h_2 \left( p_{\text{d}} \right)- \frac{\alpha D_{\text{th}}}{ \rho T f_{\max}}\right] =0,\label{KKT_c}\\
	&w_2 \left( P_{\text{umax}}-p_{\text{u}}\right) =0,\label{KKT_d}\\
	&w_3 \left( P_{\text{dmax}}-p_{\text{d}}\right) =0,\label{KKT_e}\\
	&w_1\geq 0, w_2\geq 0, w_3\geq 0,\label{KKT_f}
\end{align}
where $w_1$, $w_2$, and $w_3$ are Lagrange multipliers.

Next, we analyze KKT points based on the above KKT conditions. First, if $p_{\text{u}}<P_{\text{umax}}$ and $p_{\text{d}}<P_{\text{dmax}}$, according to the complementary slackness conditions \eqref{KKT_d} and \eqref{KKT_e}, the corresponding Lagrange multipliers $w_2$ and $w_3$ should be zero. Substituting $w_2 = 0$ and $w_3 = 0$ into \eqref{KKT_a} and \eqref{KKT_b} and simplifying the expressions, we can obtain the following relationship as
\begin{align}\label{equation1}
	\frac{f'_1 \left( p_{\text{u}} \right)}{h'_1 \left( p_{\text{u}} \right)} =\frac{f'_2 \left( p_{\text{d}} \right)}{h'_2 \left( p_{\text{d}} \right)}.
\end{align}
Equation \eqref{equation1} implicitly defines a curve in the $\left( p_{\text{u}}, p_{\text{d}} \right)$ plane, whose characteristics are provided in the following lemma.

\begin{lemma}\label{Lemma3}
	The functions $\frac{f'_1 \left( p_{\text{u}} \right)}{h'_1 \left( p_{\text{u}} \right)}$ and $\frac{f'_2 \left( p_{\text{d}} \right)}{h'_2 \left( p_{\text{d}} \right)}$ are monotonically decreasing with respect to $ p_{\text{u}}$ and $ p_{\text{d}}$, respectively. Therefore, the curve corresponding to \eqref{equation1} is strictly increasing, i.e., for any pair $\left( p_{\text{u}}, p_{\text{d}} \right)$ satisfying \eqref{equation1}, an increase in $p_{\text{u}}$ necessitates a corresponding increase in $ p_{\text{d}}$ to maintain the equality.
\end{lemma}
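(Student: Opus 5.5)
The plan is a direct computation. Each ratio $f'_i/h'_i$ involves only one variable, so I will reduce it to a closed form and sign its derivative. Take the uplink; the downlink is identical with $(g_{\text{d}},g_{\text{CE}})$ in place of $(g_{\text{u}},g_{\text{SE}})$. Write $a = g_{\text{u}}/(B_{\max}N_0)$ and $b = g_{\text{SE}}/(B_{\max}N_0)$, so that $a>b>0$ in the superior legitimate channel case. Set $L_a = \ln(1+ap_{\text{u}})$ and $L_b=\ln(1+bp_{\text{u}})$. The base-2 logarithms and the factor $B_{\max}$ cancel in $f_1$, giving $f_1 = L_b/L_a$. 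Also $h_1 = K/L_a$ for the positive constant $K = D_{\text{th}}\ln 2/(\rho T B_{\max})$.

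Next I differentiate:
\begin{align}
f'_1 = \frac{1}{L_a^2}\left(\frac{bL_a}{1+bp_{\text{u}}}-\frac{aL_b}{1+ap_{\text{u}}}\right),\qquad h'_1 = -\frac{Ka}{(1+ap_{\text{u}})L_a^2}.
\end{align}
The factor $L_a^2$ cancels in the ratio, leaving
\begin{align}
\frac{f'_1}{h'_1} = \frac{1}{K}\left(L_b - \frac{b}{a}\cdot\frac{(1+ap_{\text{u}})L_a}{1+bp_{\text{u}}}\right).
\end{align}
By \textbf{Lemma \ref{Lemma2}}, $f'_1<0$, and clearly $h'_1<0$, so this ratio is positive. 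That is consistent with the closed form but is not needed for monotonicity.

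The key step is to differentiate this expression in $p_{\text{u}}$. The derivative of $(1+ap_{\text{u}})L_a/(1+bp_{\text{u}})$ equals $[a(L_a+1)(1+bp_{\text{u}}) - b(1+ap_{\text{u}})L_a]/(1+bp_{\text{u}})^2$. After multiplying by $b/a$ and subtracting from $L_b' = b/(1+bp_{\text{u}})$, the constant terms $\pm b/(1+bp_{\text{u}})$ should cancel. This leaves
\begin{align}
K\frac{\mathrm{d}}{\mathrm{d}p_{\text{u}}}\left(\frac{f'_1}{h'_1}\right) = \frac{bL_a}{1+bp_{\text{u}}}\left(\frac{b(1+ap_{\text{u}})}{a(1+bp_{\text{u}})}-1\right).
\end{align}
The bracket is negative because $b(1+ap_{\text{u}}) - a(1+bp_{\text{u}}) = b-a<0$. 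Since $L_a>0$ for $p_{\text{u}}>0$, the derivative is strictly negative, so $f'_1/h'_1$ is strictly decreasing. The same computation gives the claim for $f'_2/h'_2$ under $g_{\text{d}}>g_{\text{CE}}$.

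The main obstacle is bookkeeping in this derivative: one has to check that the cancellation really leaves the single factored term. I would verify it carefully, or confirm it numerically, before concluding.

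For the curve, define $F_1(p_{\text{u}}) = f'_1/h'_1$ and $F_2(p_{\text{d}}) = f'_2/h'_2$, both strictly decreasing. Take two points on the curve $F_1(p_{\text{u}})=F_2(p_{\text{d}})$ with $p_{\text{u}}<\tilde p_{\text{u}}$. Then $F_2(\tilde p_{\text{d}}) = F_1(\tilde p_{\text{u}}) < F_1(p_{\text{u}}) = F_2(p_{\text{d}})$. Strict monotonicity of $F_2$ then forces $\tilde p_{\text{d}}>p_{\text{d}}$. Hence the curve \eqref{equation1} is the graph of a strictly increasing relation $p_{\text{d}} = F_2^{-1}(F_1(p_{\text{u}}))$ on its domain.
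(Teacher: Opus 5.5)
Your proof is correct and follows the paper's route: write $f'_1/h'_1$ in closed form, differentiate it, and sign the result. Your cancellation checks out and gives exactly \eqref{A_Lemma3_3} with the labels $a\leftrightarrow b$ swapped, and your argument for the increasing curve just makes the paper's ``therefore'' explicit.

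One aside is wrong, though you do not use it. For $g_{\text{u}}>g_{\text{SE}}$, Lemma~\ref{Lemma2} gives $f'_1>0$, not $f'_1<0$, so $f'_1/h'_1<0$ rather than positive. Your own closed form confirms this: it vanishes at $p_{\text{u}}=0$ and then strictly decreases. This negative sign is also what the paper relies on later, in the proof of Proposition~\ref{Prop3}.
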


\begin{proof}
	See Appendix \ref{Appendix_Lemma3}.
\end{proof}

In addition, according to \textbf{Proposition \ref{Prop2}}, the equality of \eqref{P3b} must hold, i.e.,
\begin{align}\label{equation2}
f_1 \left( p_{\text{u}} \right) +  f_2 \left( p_{\text{d}} \right)- h_1 \left( p_{\text{u}} \right)-h_2 \left( p_{\text{d}} \right)=  \frac{\alpha D_{\text{th}}}{ \rho T f_{\max}}.
\end{align}

By solving \eqref{equation1} and \eqref{equation2}, we can obtain a candidate KKT point when $p_{\text{u}}<P_{\text{umax}}$ and $p_{\text{d}}<P_{\text{dmax}}$. Similarly, by substituting $p_{\text{u}}=P_{\text{umax}}$ and $p_{\text{d}}=P_{\text{dmax}}$ into \eqref{equation2}, respectively, we can obtain the other two candidate KKT points. As KKT conditions are necessary for local optimality, the global optimum, which must also be a local optimum, is one of the above three KKT points\footnote{Although the simplified problem \eqref{P3} is non-convex, we can verify that the LICQ holds for \eqref{P3}. Specifically, for any point in the feasible set, there are at most two constraints holding with equality, and their gradients are linearly independent. Therefore, according to \cite{KKT}, any local optimum of \eqref{P3} must satisfy the KKT conditions. Since the global optimum is also a local optimum, it must also satisfy the KKT conditions. Consequently, the global optimum must be one of the KKT points.}. Based on the above analysis, we have the following proposition.
\begin{proposition}\label{Prop3}
	The system of simultaneous equations defined by \eqref{equation1} and \eqref{equation2} has at most one solution in the region where $p_{\text{u}}>0$ and $p_{\text{d}}>0$. Denoting the solution (if it exists) as $\left( \hat{p_{\text{u}}}, \hat{p_{\text{d}}} \right)$, if $\hat{p_{\text{u}}}\leq P_{\text{umax}}$ and $\hat{p_{\text{d}}}\leq P_{\text{dmax}}$, then $\left( \hat{p_{\text{u}}}, \hat{p_{\text{d}}} \right)$ is the globally optimal solution to the optimization problem \eqref{P3}. Otherwise, if the solution does not exist, or if it exists but $\hat{p_{\text{u}}}>P_{\text{umax}}$ or $\hat{p_{\text{d}}}>P_{\text{dmax}}$, we can substitute $p_{\text{u}}=P_{\text{umax}}$ or $p_{\text{d}}=P_{\text{dmax}}$ into \eqref{equation2} and take the solution point with a smaller objective function value as the globally optimal solution.
\end{proposition}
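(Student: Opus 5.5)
The plan is to show that the curve \eqref{equation1} is increasing while the constraint curve \eqref{equation2} is decreasing in the $(p_{\text{u}},p_{\text{d}})$ plane. Two such curves can cross at most once, which gives uniqueness; optimality then follows from the KKT enumeration carried out before the statement.

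\textbf{Step 1: monotonicity of the curves.} By \textbf{Lemma \ref{Lemma3}}, the curve \eqref{equation1} is strictly increasing. For \eqref{equation2}, define $F(p_{\text{u}},p_{\text{d}})=f_1(p_{\text{u}})-h_1(p_{\text{u}})+f_2(p_{\text{d}})-h_2(p_{\text{d}})$. In the superior legitimate channel case, \textbf{Lemma \ref{Lemma2}} (applied with $x=p$ and $B_{\max}$ fixed) shows that $f_1$ and $f_2$ are increasing in power. Since $R_{\text{u}}$ and $R_{\text{d}}$ are increasing, $h_1$ and $h_2$ are decreasing. Hence $f_i-h_i$ is strictly increasing in its own variable, and the level set $F=\frac{\alpha D_{\text{th}}}{\rho T f_{\max}}$ is the graph of a strictly decreasing function $p_{\text{d}}=\phi(p_{\text{u}})$ on its domain. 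The implicit function theorem gives $\phi'=-(f_1'-h_1')/(f_2'-h_2')<0$.

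\textbf{Step 2: uniqueness.} Suppose two distinct solutions exist. If they share the same $p_{\text{u}}$, the strict monotonicity of $f_2-h_2$ forces equal $p_{\text{d}}$, a contradiction. Otherwise, order them so that $p_{\text{u}}^{(1)}<p_{\text{u}}^{(2)}$. \textbf{Lemma \ref{Lemma3}} then gives $p_{\text{d}}^{(1)}<p_{\text{d}}^{(2)}$, while Step 1 gives $p_{\text{d}}^{(1)}>p_{\text{d}}^{(2)}$, again a contradiction. The delicate point is that \textbf{Lemma \ref{Lemma3}} makes the ratio $f_i'/h_i'$ strictly monotone. I would cite this directly. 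If the ratio could fail to be injective, I would instead argue via strict decrease of the left side of \eqref{equation1} in $p_{\text{u}}$ and of the right side in $p_{\text{d}}$.

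\textbf{Step 3: global optimality.} Following the footnote, LICQ holds on the feasible set, so the global minimizer of \eqref{P3} is a KKT point. By \textbf{Proposition \ref{Prop2}}, \eqref{P3b} is active at the optimum. The complementary slackness conditions \eqref{KKT_d}--\eqref{KKT_e} then split the possibilities into interior-power points, which must satisfy \eqref{equation1} and \eqref{equation2}, and boundary points with $p_{\text{u}}=P_{\text{umax}}$ or $p_{\text{d}}=P_{\text{dmax}}$.

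One subtlety is that the derivation of \eqref{equation1} divides by $1-w_1$. I would check that $w_1=1$ is impossible: it would give $h_1'=0$, but $h_1'<0$. Another is the lower bound $p\ge 0$. Powers near zero make $h_i$ blow up, so \eqref{P3b} fails there; hence the lower bounds are never active and need no multipliers.

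Consequently, if $(\hat{p_{\text{u}}},\hat{p_{\text{d}}})$ exists and lies in the box, it is the only interior candidate. I must still rule out that a boundary point beats it. For this I would argue that if the optimum were at $p_{\text{u}}=P_{\text{umax}}$ with $p_{\text{d}}<P_{\text{dmax}}$, then $w_2\ge0$ combined with \eqref{KKT_a}--\eqref{KKT_b} would force the ratio ordering $f_1'/h_1'\le f_2'/h_2'$ at $(P_{\text{umax}},\phi(P_{\text{umax}}))$. By the monotonicity from Step 1 and \textbf{Lemma \ref{Lemma3}}, this is incompatible with the crossing point lying at $\hat{p_{\text{u}}}\le P_{\text{umax}}$, except at the crossing itself. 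The symmetric argument handles $p_{\text{d}}=P_{\text{dmax}}$.

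In the remaining case, where no interior solution exists inside the box, only boundary KKT points remain. Substituting $p_{\text{u}}=P_{\text{umax}}$ or $p_{\text{d}}=P_{\text{dmax}}$ into \eqref{equation2} determines the other variable uniquely by Step 1, and comparing the objective values of these candidates gives the optimum. This boundary-versus-interior sign comparison is the step I expect to be the main obstacle.
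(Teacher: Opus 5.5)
\paragraph{Gap: the boundary-exclusion argument in Step~3.}
Your Steps 1 and 2 agree with the paper: the curve \eqref{equation1} is increasing by Lemma~\ref{Lemma3}, and \eqref{equation2} is decreasing because $f_i-h_i$ is increasing, so the curves cross at most once. Your checks that $w_1\neq 1$, that the lower bounds are inactive, and your treatment of the case with no feasible crossing are also fine.

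The gap is in the step you flag as the crux: the comparison you propose produces no contradiction. Use \eqref{KKT_a}--\eqref{KKT_b} as printed at $(P_{\text{umax}},\phi(P_{\text{umax}}))$ with $w_3=0$. This gives $w_1=f_2'/(f_2'-h_2')$ and
\[
w_2=\frac{h_1'h_2'}{f_2'-h_2'}\left(\frac{f_2'}{h_2'}-\frac{f_1'}{h_1'}\right),
\]
so $w_2\ge 0$ is indeed $f_1'/h_1'\le f_2'/h_2'$, as you say. Now suppose $\hat{p_{\text{u}}}<P_{\text{umax}}$. Then $\phi(P_{\text{umax}})<\hat{p_{\text{d}}}$, and Lemma~\ref{Lemma3} gives
$\frac{f_1'}{h_1'}(P_{\text{umax}})<\frac{f_1'}{h_1'}(\hat{p_{\text{u}}})=\frac{f_2'}{h_2'}(\hat{p_{\text{d}}})<\frac{f_2'}{h_2'}(\phi(P_{\text{umax}}))$.
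So the condition you derived is \emph{implied} by the crossing lying in the box. The boundary point passes the test exactly when you need it to fail.

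\paragraph{Source of the error and the repair.}
The cause is a sign error in the printed KKT system. Since $w_1$ enters as $-w_1\nabla(\cdot)$ for the $\ge$-constraint in a minimization, the bounds $P_{\text{umax}}-p_{\text{u}}\ge 0$ and $P_{\text{dmax}}-p_{\text{d}}\ge 0$ must enter \eqref{KKT_a}--\eqref{KKT_b} as $+w_2$ and $+w_3$, not $-w_2$ and $-w_3$. As a check, the printed $w_2$ above equals $\mathrm{d}O/\mathrm{d}p_{\text{u}}$ in \eqref{A_Prop3_6_c}. So ``$w_2\ge 0$'' says the objective can be lowered by moving left along the active curve, which is the opposite of optimality at that endpoint.

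With the corrected signs:
\begin{itemize}
\item $w_2\ge 0$ requires $f_1'/h_1'\ge f_2'/h_2'$ at $p_{\text{u}}=P_{\text{umax}}$;
\item $w_3\ge 0$ requires $f_2'/h_2'\ge f_1'/h_1'$ at $p_{\text{d}}=P_{\text{dmax}}$.
\end{itemize}
The chain above contradicts the first, and its mirror image (using $\phi^{-1}(P_{\text{dmax}})<\hat{p_{\text{u}}}$) contradicts the second, unless the boundary point is the crossing itself. With that fix, your enumeration goes through.

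\paragraph{How the paper argues instead.}
The paper avoids multipliers entirely. Once Proposition~\ref{Prop2} places the optimum on the active curve $p_{\text{d}}=C_2(p_{\text{u}})$, it studies the reduced objective $O(p_{\text{u}})=f_1(p_{\text{u}})+f_2(C_2(p_{\text{u}}))$. Its derivative \eqref{A_Prop3_6_c} has the sign of $\frac{f_2'}{h_2'}(C_2(p_{\text{u}}))-\frac{f_1'}{h_1'}(p_{\text{u}})$. By Lemma~\ref{Lemma3}, this difference is strictly increasing in $p_{\text{u}}$ and vanishes only at $\hat{p_{\text{u}}}$, so $O$ first decreases and then increases. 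This yields global optimality of $(\hat{p_{\text{u}}},\hat{p_{\text{d}}})$ when it lies in the box. Otherwise $O$ is monotone on the feasible arc $[C_2^{-1}(P_{\text{dmax}}),P_{\text{umax}}]$, so the optimum is an endpoint.

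Both halves of the proposition thus follow from one computation, with no LICQ and no multiplier signs to get right. Your repaired KKT route is the same sign test in disguise, since the correctly signed $w_2$ equals $-\mathrm{d}O/\mathrm{d}p_{\text{u}}$. It is simpler to argue along the curve directly.
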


\begin{proof}
	See Appendix \ref{Appendix_Prop3}.
\end{proof}

Based on the above analysis, to find the global solution to \eqref{P3}, we can first find the intersection point of two curves defined by \eqref{equation1} and \eqref{equation2}. As the two curves are monotonically increasing and decreasing, respectively, their intersection point (if exists) is unique and can be obtained efficiently through binary search. If the intersection point $\left( \hat{p_{\text{u}}}, \hat{p_{\text{d}}} \right)$ satisfies the constraints $\hat{p_{\text{u}}}\leq P_{\text{umax}}$ and $\hat{p_{\text{d}}}\leq P_{\text{dmax}}$, it is globally optimal. Otherwise, the optimum lies on the boundary. We can obtain the boundary points by substituting $p_{\text{u}}=P_{\text{umax}}$ and $p_{\text{d}}=P_{\text{dmax}}$ into \eqref{equation2}, and select the point with a smaller objective function value as the globally optimal solution. The algorithm to find the global solution to \eqref{P3} is summarized in \textbf{Algorithm \ref{Algo1}}. In \textbf{Algorithm \ref{Algo1}}, the KKT point $(\hat p_{\text{u}},\hat p_{\text{d}})$ associated with \eqref{equation1} and \eqref{equation2} is obtained via a nested bisection method, whose complexity is $\mathcal{O}(\log^2(1/\epsilon))$ to achieve an accuracy $\epsilon$. If the point is infeasible, two additional scalar equations on the boundary (with $p_{\text{u}}=P_{\text{umax}}$ or $p_{\text{d}}=P_{\text{dmax}}$) are solved by single binary searches, each with complexity $\mathcal{O}(\log(1/\epsilon))$. Therefore, the overall complexity of \textbf{Algorithm \ref{Algo1}} is $\mathcal{O}(\log^2(1/\epsilon))$.

\begin{algorithm}[t]\label{Algo1}
	\caption{The proposed algorithm for solving problem \eqref{P3}}
	\SetKwInOut{Input}{Input}\SetKwInOut{Output}{Output}\SetKwInOut{Initialize}{Initialization}
	\textbf{Input:} System parameter including $P_{\text{umax}}$, $P_{\text{dmax}}$, $D_{\text{th}}$, etc; the convergence tolerance $\epsilon$.\\
	Solve the system of simultaneous equations \eqref{equation1} and \eqref{equation2} through binary search and denote the solution as $\left( \hat{p_{\text{u}}}, \hat{p_{\text{d}}} \right)$ if it exists.\\
	\eIf{The solution exists with $\hat{p_{\text{u}}}\leq P_{\text{umax}}$ and $\hat{p_{\text{d}}}\leq P_{\text{dmax}}$}
	{Set the global solution as $p^*_{\text{u}}=\hat{p_{\text{u}}}$ and $p^*_{\text{d}} = \hat{p_{\text{d}}}$\;}
	{Substitute $p_{\text{u}}=P_{\text{umax}}$ into \eqref{equation2}, solve for $p_{\text{d}}$, and denote the solution as $\overline{p_{\text{d}}}$. Calculate the corresponding objective function value $O_1$\;
	Substitute $p_{\text{d}}=P_{\text{dmax}}$ into \eqref{equation2}, solve for $p_{\text{u}}$, and denote the solution as $\overline{p_{\text{u}}}$. Calculate the corresponding objective function value $O_2$\;	
	\eIf{$O_1<O_2$}{Set the globally optimal solution as $p^*_{\text{u}}=P_{\text{umax}}$ and $p^*_{\text{d}} = \overline{p_{\text{d}}}$\;}
	{Set the globally optimal solution as $p^*_{\text{u}}= \overline{p_{\text{u}}}$ and $p^*_{\text{d}} =P_{\text{dmax}}$\;}}
	\textbf{Output:} the optimal power allocation $\left( p^*_{\text{u}},p^*_{\text{d}}\right) $.
\end{algorithm}

\subsection{Other Channel Cases}
If the superior legitimate channel conditions ($g_{\text{u}}>g_{\text{SE}}$ and $g_{\text{d}}>g_{\text{CE}}$) are not satisfied, although the KKT conditions can still be established, the monotonicity properties in \textbf{Lemma \ref{Lemma3}} may no longer hold and the intersection of \eqref{equation1} and \eqref{equation2} is difficult to obtain. Therefore, \textbf{Algorithm \ref{Algo1}} is not directly applicable. In this subsection, we propose a general algorithm based on the MO theory to address such cases. 

\begin{figure} [t]
	\centering
	\includegraphics[width=\linewidth]{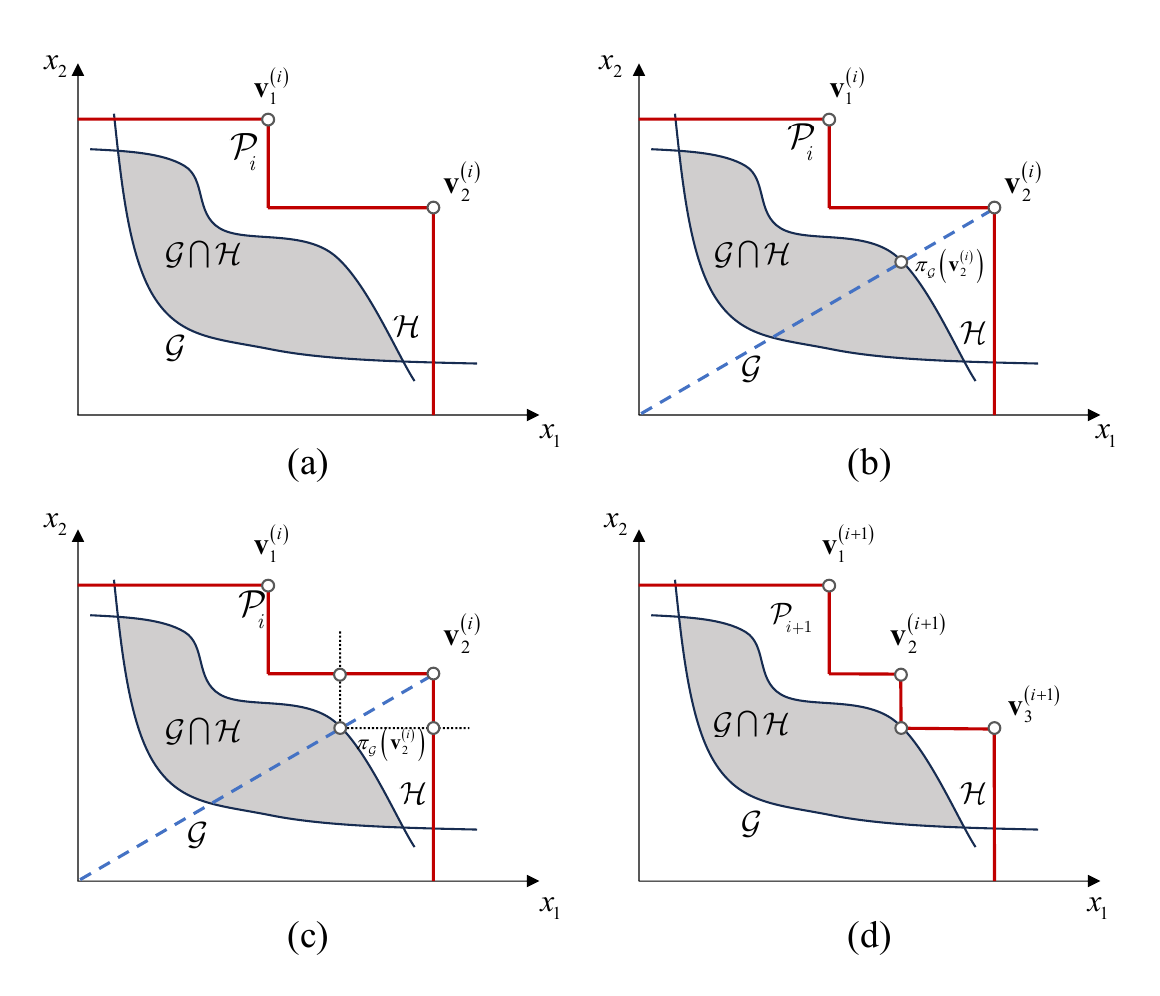}
	\caption{Geometric illustration of the outer polyblock approximation algorithm procedure for a two-dimensional MO problem.}
	\label{fig:mon_optimization}
\end{figure} 

MO is a general global optimization framework for problems whose objective and constraints are monotonic (increasing or decreasing) functions of the decision variables, without requiring convexity, and has been successfully applied to various wireless resource-allocation problems~\cite{mono_optimization1, mono_optimization2, mono_optimization3}. Specifically, MO is concerned with problems of the form~\cite{mono_optimization1}
\begin{align}\label{mop}
	\max \left\{f\left( \mathbf{x} \right)| \mathbf{x} \in \mathcal{G} \cap \mathcal{H} \right\},
\end{align}
where $f\left( \mathbf{x} \right): \mathbb{R}^n \rightarrow \mathbb{R}$ is an increasing function, and $\mathcal{G} \subseteq \left[ \mathbf{0}, \mathbf{b} \right] $ and  $\mathcal{H} \subseteq \left[ \mathbf{0}, \mathbf{b} \right] $ are two sets with the following forms
\begin{align}
	&\mathcal{G} = \left\{ \mathbf{x} \in \left[ \mathbf{0}, \mathbf{b} \right] | \widetilde{g}_i\left( \mathbf{x} \right) \leq 0, i = 0,1,\cdots,m_1   \right\},\\
	&\mathcal{H} = \left\{ \mathbf{x} \in \left[ \mathbf{0}, \mathbf{b} \right] | \widetilde{h}_i\left( \mathbf{x} \right) \geq 0, i = 0,1,\cdots,m_2   \right\},
\end{align}
where $\widetilde{g}_i$ and $\widetilde{h}_i$ are increasing functions of $\mathbf{x}$.

For the MO problem in \eqref{mop}, the optimal solution lies on the upper boundary of the feasible set $\mathcal{G} \cap \mathcal{H}$. Hence, it is sufficient to search along this upper boundary rather than over the entire feasible region, which significantly reduces the computational complexity. A common approach to solving such problems is the polyblock outer approximation algorithm, which constructs a sequence of polyblocks that outer-approximate the feasible set. Specifically, a polyblock $\mathcal{P}$ is defined as a union of boxes $\mathcal{B}_k = [\mathbf{0},\mathbf{v}_k] = \{\mathbf{x}\mid \mathbf{0}\leq \mathbf{x}\leq \mathbf{v}_k\}$, where $\mathbf{v}_k\in\mathcal{V}$ is a vertex of the polyblock and $\mathcal{V}$ is the vertex set. 

Fig.~\ref{fig:mon_optimization} illustrates the outer polyblock approximation algorithm for a two-dimensional MO problem. At the $i$-th iteration, the algorithm evaluates the objective function values at each vertex in the vertex set $\mathcal{V}_i$, and selects a vertex $\mathbf{v}_k$ that yields the maximal value. Next, the algorithm projects it onto the set $\mathcal{G}$, yielding $\pi_{\mathcal{G}}(\mathbf{v}_k)$, as shown in Fig.~\ref{fig:mon_optimization}(b). Then, the cone
$\mathcal{K}^{+}_{\pi_{\mathcal{G}}(\mathbf{v}_k)} \triangleq \{\mathbf{x}\mid \mathbf{x} > \pi_{\mathcal{G}}(\mathbf{v}_k)\}$
is removed from the current polyblock $\mathcal{P}_i$ to obtain a new polyblock $\mathcal{P}_{i+1} = \mathcal{P}_i\setminus\mathcal{K}^{+}_{\pi_{\mathcal{G}}(\mathbf{v}_k)}$ that more tightly outer-approximates the feasible set. Repeating this procedure generates a sequence of polyblocks that increasingly tighten the outer approximation,
\begin{align}
	\mathcal{P}_1 \supset \mathcal{P}_2 \supset \cdots \supset \mathcal{P}_k \supset \cdots \supset \mathcal{G} \cap \mathcal{H}.
\end{align}

During the iterations, the upper bound of the optimal objective value can be obtained as the maximum objective value over all vertices of $\mathcal{P}_i$, denoted as $U^{(i)}$. The lower bound is given by the best objective value among the projection points $\pi_{\mathcal{G}}(\mathbf{v}_k)$, denoted as $L^{(i)}$. It can be shown that $U^{(i)}$ is non-increasing and $L^{(i)}$ is non-decreasing over the iterations. The algorithm terminates when the gap between these two bounds is smaller than a tolerance $\epsilon>0$, i.e., $U^{(i)}-L^{(i)}\le\epsilon$.

According to \cite[Proposition 3.9]{mono_optimization1}, if $f\left( \mathbf{x} \right)$ is Lipschitz-continuous, then the above outer polyblock approximation algorithm is guaranteed to converge to an $\epsilon$-optimal solution in a finite number of iterations for any given $\epsilon>0$.  The detailed proof of the above proposition can be found in \cite{mono_optimization4}.

We now apply the above MO framework to the considered resource
allocation problem via variable substitutions. To illustrate the procedure, we take the superior eavesdropping channel case where $g_{\text{u}}<g_{\text{SE}}$ and $g_{\text{d}}<g_{\text{CE}}$ as an example. In this case, we have $p_{\text{u}}^* = P_{\text{umax}}$ and $p_{\text{d}}^* = P_{\text{dmax}}$ according to \textbf{Proposition \ref{Prop2}}. The problem \eqref{P2} can be simplified as
	\begin{subequations}\label{P4}
	\begin{align} \min\limits_{\substack{B_{\text{u}},B_{\text{d}}}} \ &{\frac{R_{\text{SE}}\left(P_{\text{umax}},B_{\text{u}} \right) }{R_{\text{u}}\left(P_{\text{umax}},B_{\text{u}} \right)} +\frac{  R_{\text{CE}}\left(P_{\text{dmax}},B_{\text{d}} \right)}{R_{\text{\text{d}}}\left(P_{\text{dmax}},B_{\text{d}} \right)} } \label{P4a} \\
		\begin{split}\text{s.t.} \ &T\left({\frac{R_{\text{SE}}\left(P_{\text{umax}},B_{\text{u}} \right) }{R_{\text{u}}\left(P_{\text{umax}},B_{\text{u}} \right)} +\frac{  R_{\text{CE}}\left(P_{\text{dmax}},B_{\text{d}} \right)}{R_{\text{\text{d}}}\left(P_{\text{dmax}},B_{\text{d}} \right)} } \right) \geq \\&\quad D_{\text{th}}\left( \frac{1}{\rho R_{\text{u}}\left(P_{\text{umax}},B_{\text{u}} \right)}+\frac{1}{R_{\text{d}}\left(P_{\text{dmax}},B_{\text{d}} \right)}+\frac{\alpha}{ \rho f_{\max}}\right),\end{split} \label{P4b} \\
		&0 \leq B_{\text{u}}\leq B_{\text{max}},0 \leq B_{\text{d}}\leq B_{\text{max}}.
	\end{align}
\end{subequations}

By substituting $B_{\text{u}} = \frac{1}{x_{\text{u}}}$ and $B_{\text{d}} = \frac{1}{x_{\text{d}}}$, the problem \eqref{P4} can be reformulated as
	\begin{subequations}\label{P5}
	\begin{align} \max\limits_{\substack{x_{\text{u}},x_{\text{d}}}} &-\left[ \frac{R_{\text{SE}}\left(P_{\text{umax}},\frac{1}{x_{\text{u}}} \right) }{R_{\text{u}}\left(P_{\text{umax}},\frac{1}{x_{\text{u}}} \right)} +\frac{  R_{\text{CE}}\left(P_{\text{dmax}},\frac{1}{x_{\text{d}}} \right)}{R_{\text{\text{d}}}\left(P_{\text{dmax}},\frac{1}{x_{\text{d}}} \right)}  \right] \label{P5a} \\
		\begin{split}\text{s.t.} \ &D_{\text{th}}\left( \frac{1}{\rho R_{\text{u}}\left(P_{\text{umax}},\frac{1}{x_{\text{u}}} \right)}+\frac{1}{R_{\text{d}}\left(P_{\text{dmax}},\frac{1}{x_{\text{d}}} \right)}+\frac{\alpha}{ \rho f_{\max}}\right)\\&\quad \quad -T\left({\frac{R_{\text{SE}}\left(P_{\text{umax}},\frac{1}{x_{\text{u}}} \right) }{R_{\text{u}}\left(P_{\text{umax}},\frac{1}{x_{\text{u}}} \right)} +\frac{  R_{\text{CE}}\left(P_{\text{dmax}},\frac{1}{x_{\text{d}}} \right)}{R_{\text{\text{d}}}\left(P_{\text{dmax}},\frac{1}{x_{\text{d}}} \right)} } \right) \leq 0,\end{split} \label{P5b} \\
		&x_{\text{u}}-\frac{1}{B_{\text{max}}}\geq 0, x_{\text{d}}-\frac{1}{B_{\text{max}}} \geq 0.
	\end{align}
\end{subequations}

Denoting the objective function of \eqref{P5} as $\widetilde{f}_{\text{obj}}\left( x_{\text{u}},x_{\text{d}}\right)$, we can prove that $\widetilde{f}_{\text{obj}}\left( x_{\text{u}},x_{\text{d}}\right)$ is increasing with respect to $x_{\text{u}}$ and $x_{\text{d}}$ by utilizing \textbf{Lemma \ref{Lemma2}}. In addition, it can be shown that the left side of each constraint is monotonically increasing. Therefore, \eqref{P5} is an MO problem, which can be solved via the polyblock outer approximation algorithm. The detailed algorithm to solve \eqref{P5} is summarized in \textbf{Algorithm \ref{Algo2}}.

For the other two cases (i.e., $g_{\text{u}}>g_{\text{SE}}$ and $g_{\text{d}}<g_{\text{CE}}$, or $g_{\text{u}}<g_{\text{SE}}$ and $g_{\text{d}}>g_{\text{CE}}$), the same MO approach can be applied via a similar reformulation.

\begin{remark}
	At each iteration, the polyblock outer approximation algorithm evaluates the objective function at newly-generated vertices, calculates the projection via a one-dimensional bisection search, and updates the polyblock. The per-iteration complexity is
	$\mathcal{O}\left( N+\log(1/\epsilon_{\rm proj})+|\mathcal{V}_i|\right) $, where $N$ denotes the variable dimension, and $\epsilon_{\rm proj}$ denotes the accuracy of the bisection search. In general, the number of iterations grows exponentially with $N$. Therefore, the overall complexity of the algorithm grows exponentially with the number of variables~\cite{mono_optimization2}. However, the complexity of polyblock outer approximation algorithm is much lower than general global optimization by exploiting the special structure of the problems~\cite{mono_optimization1}. Moreover, in our setting, the dimension of the MO problem in \eqref{P5} is reduced to $N=2$, so that the overall complexity may be acceptable in practice.
\end{remark}

\begin{algorithm}[t]\label{Algo2}
	\caption{The polyblock outer approximation algorithm for solving problem \eqref{P5}}
	\SetKwInOut{Input}{Input}\SetKwInOut{Output}{Output}\SetKwInOut{Initialize}{Initialization}
	\textbf{Input:} System parameter $P_{\text{umax}}$, $P_{\text{dmax}}$, $D_{\text{th}}$, etc; the convergence tolerance $\epsilon$.\\
	Initialize polyblock $\mathcal{P}^{\left( 1\right) }$, characterized by the vertex set $\mathcal{V}^{\left( 1\right) } = \left\{ \mathbf{v}^{\left( 1\right) }_1 \right\}$, where $\mathbf{v}^{\left( 1\right) }_1 = \left(x_{\text{u,max}},x_{\text{d,max}} \right)$, and $x_{\text{u,max}}$ and $x_{\text{d,max}}$ are two values such that the constraint \eqref{P5b} is satisfied with equality at the points $\left(x_{\text{u,max}},1/B_{\text{max}} \right)$ and $\left(1/B_{\text{max}},x_{\text{d,max}} \right)$, respectively. Set $i=1$. Set the current objective function value $f^{\left( 0\right) }=-\infty$\\
	\Repeat{$|\overline{f}^{\left( i\right)}-\widetilde{f}_{\text{obj}}\left(\mathbf{v}^{\left( i\right) }\right) |<\epsilon$}{From $\mathcal{V}^{\left( i\right) }$ find the vertex that maximizes the objective function of \eqref{P5}, denoted as $\mathbf{v}^{\left( i\right) }= \arg \max\limits_{\mathbf{v}^{\left( i\right) }_k \in \mathcal{V}^{\left( i\right) }} \widetilde{f}_{\text{obj}}\left( \mathbf{v}^{\left( i\right) }_k\right)$\;
	Compute the projection of $\mathbf{v}^{\left( i\right) }$ through binary search, denoted as $\mathbf{x}^{\left( i\right) } = \pi_{\mathcal{G}}\left(\mathbf{v}^{\left( i\right) } \right)$\;
	\eIf{$\mathbf{x}^{\left( i\right)}\geq\left( 1/B_{\text{max}},1/B_{\text{max}}\right) $ and $\widetilde{f}_{\text{obj}}\left( \mathbf{x}^{\left( i\right)}\right)>f^{\left( i-1\right) }$}{Set the current best solution and objective function value as $\overline{\mathbf{x}}^{\left( i\right)} = \mathbf{x}^{\left( i\right)}$ and $\overline{f}^{\left( i\right)} = \widetilde{f}_{\text{obj}}\left( \mathbf{x}^{\left( i\right)}\right)$.}{Set $\overline{\mathbf{x}}^{\left( i\right)} = \overline{\mathbf{x}}^{\left( i-1\right)}$ and $\overline{f}^{\left( i\right)} = \overline{f}^{\left( i-1\right)}$.}
	Construct a new polyblock $\mathcal{P}^{\left( i+1\right) }$ with the vertex set $\mathcal{V}^{\left( i+1\right) }$, where $\mathcal{V}^{\left( i+1\right) }$ is generated by $\mathcal{V}^{\left( i+1\right)} = \mathcal{V}^{\left( i\right)}\setminus \mathcal{V}_*\cap \left\{\mathbf{v} \!+\! \left(x^{\left( i\right) }_d \! -\! v_d \right) \mathbf{e}_d|\ \mathbf{v}\in \mathcal{V}_*, d\in\left\{1,2\right\} \right\}$, where $\mathcal{V}_* = \left\{\mathbf{v} \in \mathcal{V}^{\left( i\right)}|\mathbf{v}>\mathbf{x}^{\left( i\right) }\right\}$\;
	Remove from  $\mathcal{V}^{\left( i+1\right)}$ the improper vertices $ \left\{\mathbf{v} \in \mathcal{V}^{\left( i+1\right)}|\mathbf{v}\ngeq\left( 1/B_{\text{max}},1/B_{\text{max}}\right)\right\}$
	Set $i = i+1$\;}
	\textbf{Output:} the optimal solution $\overline{\mathbf{x}}^{\left( i\right)}$ to \eqref{P5}.
\end{algorithm}

\subsection{Summary}
In the considered system, the proposed resource allocation procedures are executed at the EIH.	At the beginning of each configuration period, the EIH updates the positions of the sensor, the robot,	and the eavesdropper using the UAV-mounted camera or radar, and then evaluates the corresponding channel gains based on a channel model or a radio map. If the closed-loop security constraint is not active, the bandwidth and transmit power are set to their maximum values. Otherwise, according to the channel ordering, the EIH applies the case-dependent solution method to obtain the resource allocation, as summarized in Table~\ref{tab2}. The computational complexity of Algorithm~\ref{Algo1} is $\mathcal{O}(\log^2(1/\epsilon))$, which is computationally efficient. Moreover, although the complexity of Algorithm~\ref{Algo2} grows exponentially with the number of variables, problem \eqref{P5} has been reduced to a two-dimensional form,	making the complexity acceptable for practical implementation.

\begin{table*}[t]
	\centering
	\caption{Optimal structure of variables and complexity under different channel conditions.}
	\label{tab2}		
	\begin{tabular}{c|c|c|c|c|c|c|c|c|c}
		\hline
		\multirow{2}{*}{Case} &
		\multicolumn{1}{c|}{\multirow{2}{*}{\makecell{Channel \\condition}}} &
		\multicolumn{7}{c|}{Optimal variable} &
		\multirow{2}{*}{Complexity} \\
		\cline{3-9}
		& & $p_{\text{u}}^*$ & $t_{\text{u}}^*$ & $B_{\text{u}}^*$ & $f^*$ & $p_{\text{d}}^*$ & $t_{\text{d}}^*$ & $B_{\text{d}}^*$ & \\
		\hline
		
		\makecell{I} &
		\makecell{$g_{\text{u}}>g_{\text{SE}}$ \\ $g_{\text{d}}>g_{\text{CE}}$} &
		\makecell{KKT-based \\algorithm} & \eqref{prop1_1} & $B_{\text{max}}$ & $f_{\text{max}}$ & \makecell{KKT-based \\algorithm} & \eqref{prop1_2} & $B_{\text{max}}$ &
		$\mathcal{O}(\log^2(1/\epsilon))$ \\
		\hline
		
		\makecell{II} &
		\makecell{$g_{\text{u}}<g_{\text{SE}}$ \\ $g_{\text{d}}<g_{\text{CE}}$} &
		$P_{\text{umax}}$ & \eqref{prop1_1} & \makecell{MO-based \\algorithm} & $f_{\text{max}}$ & $P_{\text{dmax}}$ & \eqref{prop1_2} & \makecell{MO-based \\algorithm} &
		\makecell{$\mathcal{O}\left( N+\log(1/\epsilon_{\rm proj})+|\mathcal{V}_i|\right) $\\ per iteration} \\
		\hline
		
		III &
		\makecell{$g_{\text{u}}>g_{\text{SE}}$ \\ $g_{\text{d}}<g_{\text{CE}}$} &
		\makecell{MO-based \\algorithm} & \eqref{prop1_1} & $B_{\text{max}}$ & $f_{\text{max}}$ & $P_{\text{dmax}}$ & \eqref{prop1_2} &\makecell{MO-based \\algorithm} &
			\makecell{$\mathcal{O}\left( N+\log(1/\epsilon_{\rm proj})+|\mathcal{V}_i|\right) $\\ per iteration} \\
		\hline
		
		IV &
		\makecell{$g_{\text{u}}<g_{\text{SE}}$ \\ $g_{\text{d}}>g_{\text{CE}}$} &
		$P_{\text{umax}}$ & \eqref{prop1_1} & \makecell{MO-based \\algorithm} & $f_{\text{max}}$ & \makecell{MO-based \\algorithm} & \eqref{prop1_2} & $B_{\text{max}}$ &
			\makecell{$\mathcal{O}\left( N+\log(1/\epsilon_{\rm proj})+|\mathcal{V}_i|\right) $\\ per iteration} \\
		\hline
	\end{tabular}
	\vspace{1mm}
	\\
	{\footnotesize Note: Case I corresponds to the superior legitimate channel case and Case II corresponds to the superior eavesdropping channel case.}
\end{table*}

\section{Simulation Results}
\label{Section:Simulation}
In this section, simulation results are provided to evaluate the proposed algorithm. Unless specified otherwise, the simulation parameters are set as: $P_{\text{umax}} = 1$ W, $P_{\text{dmax}} = 1$ W, $B_{\text{max}} = 20$ kHz, $T = 0.1$ s, $N_0=-174$ dBm/Hz, and $D_{\text{th}} = 300$ bits. The air-to-ground and ground-to-air channel gains are modeled as $g = \kappa_1 d^{-\eta_1}$, while the ground-to-ground channel gains are calculated as $g = \kappa_2 d^{-\eta_2}$, where $\kappa_1$ and $\kappa_2$ denote the channel gain at a reference distance of 1 m, $\eta_1$ and $\eta_2$ are corresponding path loss exponents, and $d$ denotes the distance between the transmitter and receiver. We set $\eta_1=2$ following the free-space path loss model~\cite{channel}, and set $\eta_2= 3.5 $~\cite{channel1}. The reference channel gains are set as: $\kappa_1=\kappa_2 = 1.42\times 10^{-4}$~\cite{channel1}. The computing-related parameters are given by: $\alpha = 200$ cycles/bit, $\rho = 0.01$, and $f_{\text{max}} = 1$ GHz.

In order to evaluate the performance of the proposed scheme, we compare it with the following benchmark schemes.
\begin{itemize}
	\item Control-oriented closed-loop optimization scheme \cite{single_loop}: jointly allocating the bandwidth, power, and transmission time of the uplink and downlink links, as well as the computing capability of the EIH, aiming to maximize the CNE without eavesdropping. If the eavesdropped closed-loop information exceeds the threshold, we reduce the uplink and downlink transmission time in equal proportion to keep the closed-loop security constraint satisfied.
	\item Separate uplink and downlink optimization scheme: Set separate security constraints for the sensor data transmission and control command transmission, and then maximize data throughput for the two links, respectively.
\end{itemize}

It is worth noting that we have also evaluated conventional secrecy-rate-maximization baseline commonly used in communication systems. However, this baseline typically leads to very small CNE, as it ignores the uplink–downlink balance within the $\textbf{SC}^3$ closed loop. Therefore, for clarity of presentation, the curve for the secrecy-rate-maximization scheme is omitted in the following figures.

First, we consider the superior legitimate channel case where $g_{\text{u}}>g_{\text{SE}}$ and $g_{\text{d}}>g_{\text{CE}}$. In such case, the locations of the sensor, the EIH, the robot, and the eavesdropper are set as $\mathbf{l}_{\text{s}} = [-500, 0, 0]$, $\mathbf{l}_{\text{EIH}} = [500, 0, 100]$, $\mathbf{l}_{\text{r}} = [-500, 0, 0]$, and $\mathbf{l}_{\text{e}} = [-500, 1000, 0]$ in meters.
\begin{figure} [t]
	\centering
	\includegraphics[width=0.9\linewidth]{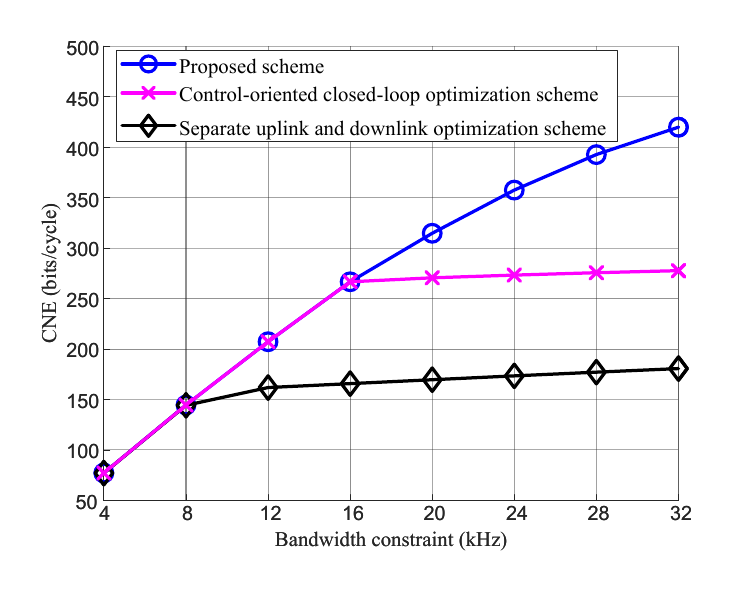}
	\caption{The CNE achieved by different schemes varying with the bandwidth constraint.}
	\label{fig:simu_CNE_vs_B}
\end{figure} 

In Fig. \ref{fig:simu_CNE_vs_B}, we compare the CNE for varying bandwidth constraints. From the figure, it can be seen that when the bandwidth constraint is lower than 16 kHz, the proposed scheme and the control-oriented closed-loop optimization scheme yield identical CNE values. The reason is that the available bandwidth is the primary performance bottleneck in this regime, rendering the security constraint inactive. In such cases, the two schemes utilize the maximum permitted bandwidth and transmit power, achieving identical CNE values. When the bandwidth constraint is higher than 16 kHz, the proposed scheme achieves the highest CNE among the three schemes, which shows the superiority of the proposed scheme. Notably, the CNE attained by the separate uplink and downlink optimization scheme is the lowest. This result underscores the benefit of a comprehensive, joint consideration of both uplink and downlink, as implemented in the proposed scheme.

\begin{figure} [t]
	\centering
	\includegraphics[width=0.9\linewidth]{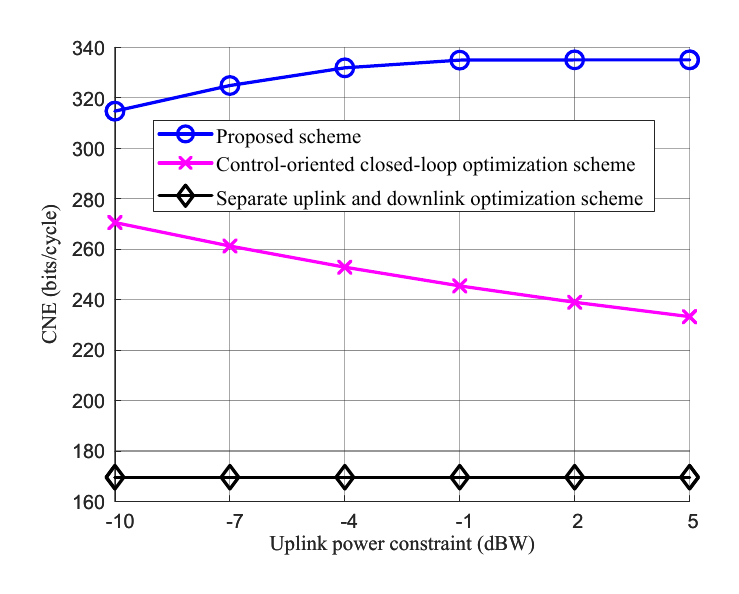}
	\caption{The CNE achieved by different schemes varying with the uplink power constraint.}
	\label{fig:simu_CNE_vs_pu}
\end{figure} 

Fig. \ref{fig:simu_CNE_vs_pu} shows the CNE for different maximum uplink power ($P_{\text{umax}}$). Interestingly, the CNE achieved with three schemes behaves differently with respect to the uplink power constraint. Specifically, the CNE increases with the proposed scheme. This is because the unconstrained optimal operating point of $p_{\text{u}}$, i.e., $\hat{p_{\text{u}}}$ in \textbf{Proposition \ref{Prop3}}, is higher than the maximum uplink power $P_{\text{umax}}$. Increasing $P_{\text{umax}}$ allows the algorithm to select solutions closer to the unconstrained optimal operating point, thereby enhancing CNE. In contrast, the CNE for the separate uplink and downlink optimization scheme remains invariant to changes in $P_{\text{umax}}$. This indicates that the overall CNE, constrained by $\min \left\{ \rho D_{\text{u}},D_{\text{d}}\right\}$, is bottlenecked by the downlink data throughput  $D_{\text{d}}$. Improving the uplink capability for the separate uplink and downlink optimization scheme does not alleviate the downlink bottleneck and consequently yields no improvement in the CNE. In addition, the CNE achieved with the control-oriented closed-loop optimization scheme decreases with the uplink power constraint. This counter-intuitive trend is because this scheme does not consider the closed-loop security constraint. Simply improving the uplink power when $g_{\text{u}}>g_{\text{SE}}$ amplifies the eavesdropped link more than the legitimate link, leading to a decrease of CNE.

\begin{figure} [t]
	\centering
	\includegraphics[width=0.9\linewidth]{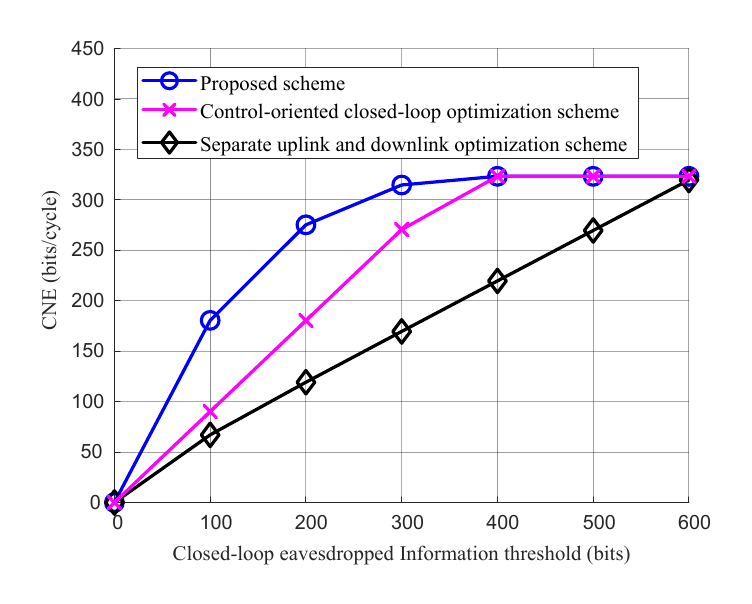}
	\caption{The CNE achieved by different schemes varying with the closed-loop eavesdropped information threshold.}
	\label{fig:simu_CNE_vs_Dth}
\end{figure} 

Fig. \ref{fig:simu_CNE_vs_Dth} presents the achieved CNE versus the closed-loop eavesdropped information thresholds, i.e., $D_{\text{th}}$. It is shown that the CNE achieved with the proposed scheme is highest among the three schemes when $D_{\text{th}}< 400$ bits, showing its superiority. When $D_{\text{th}}\geq 600$ bits, all three schemes achieve the same CNE, which remains constant. This is because when $D_{\text{th}}$ is sufficiently high, the closed-loop security constraint becomes inactive. In such cases, these schemes can reach the maximum CNE by fully utilizing all the system resources. When $D_{\text{th}}\leq 300$ bits, it is also observed that the CNE achieved with the control-oriented closed-loop optimization scheme exhibits an approximately linear dependence on $D_{\text{th}}$. This is because the transmission time is decreased in equal proportion with $\frac{D_{\text{th}}}{\rho D_{\text{SE},0} + D_{\text{CE},0}}$. As the data throughput is in proportion to the transmission time, the achieved CNE is also linearly related to  $D_{\text{th}}$. Comparing the proposed scheme and the control-oriented closed-loop optimization scheme in the security-constrained region ($D_{\text{th}}\leq 400$), we can conclude that the proposed scheme can achieve an internal coordination mechanism such that the CNE can decrease by a smaller proportion if the closed-loop eavesdropped information threshold decreases by a certain proportion.

\begin{figure} [t]
	\centering
	\includegraphics[width=0.9\linewidth]{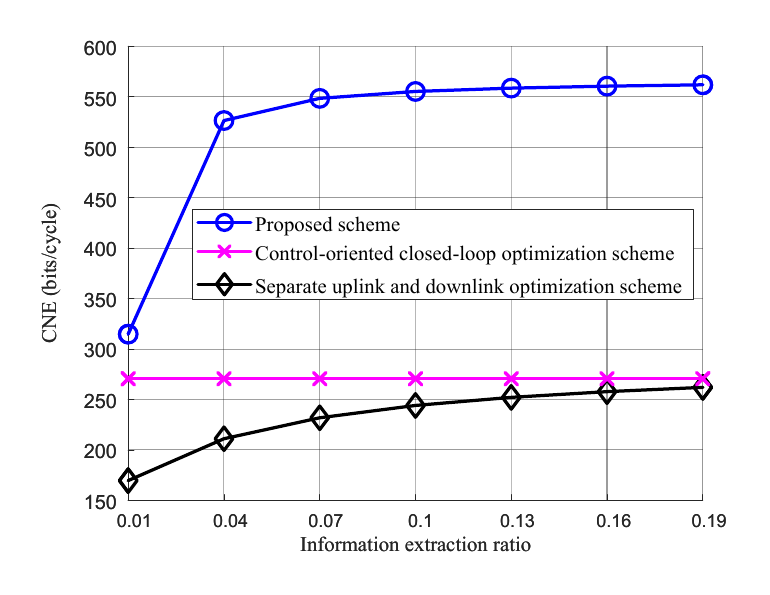}
	\caption{The CNE achieved by different schemes varying with the information extraction ratio.}
	\label{fig:simu_CNE_vs_rho}
\end{figure} 

In Fig. \ref{fig:simu_CNE_vs_rho}, we show the CNE for different information extraction ratios ($\rho$). It can be observed that the control-oriented closed-loop optimization baseline achieves nearly the same CNE for all $\rho$. The reason is that increasing $\rho$ increases both the amount of task-relevant information extracted from the uplink and the information leakage, so the achievable CNE is still limited by the security threshold. In contrast, the proposed scheme benefits from a larger $\rho$ and achieves a higher CNE, since it can adaptively rebalance resources to mitigate the uplink bottleneck.

\begin{figure} [t]
	\centering
	\includegraphics[width=\linewidth]{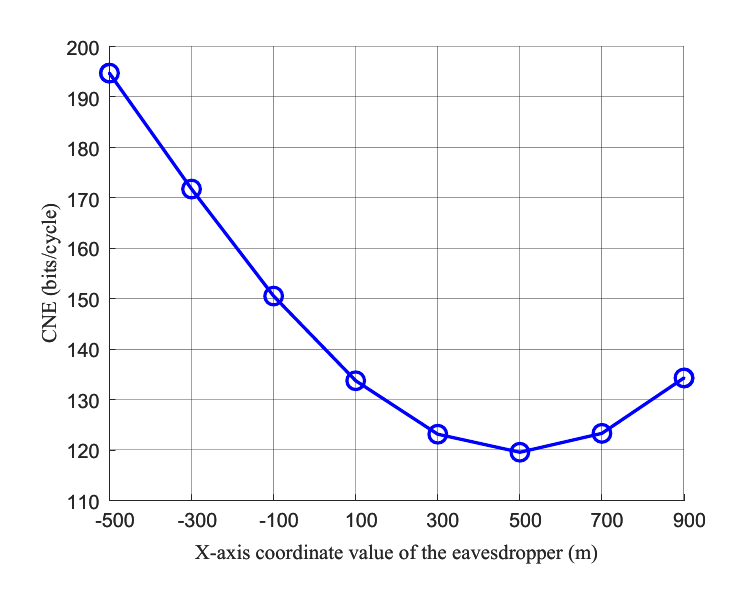}
	\caption{The CNE with the x-axis coordinate values of the eavesdropper.}
	\label{fig:simu_CNE_vs_le}
\end{figure} 

Fig. \ref{fig:simu_CNE_vs_le} illustrates how the CNE is influenced by the location of the eavesdropper. Specifically, we set the position of the eavesdropper as $\mathbf{l}_{\text{e}} = [l_{\text{e,x}}, 1100, 0]$, and show the relationship between the CNE and the x-coordinate $l_{\text{e,x}}$. It can be seen that the CNE first decreases and then increases with $l_{\text{e,x}}$, achieving its minimum value when $l_{\text{e,x}} = -500$ m. As $\mathbf{l}_{\text{s}} = [-500, 0, 0]$ and $\mathbf{l}_{\text{EIH}} = [500, 0, 100]$, it is concluded that the eavesdropper should be closer to the EIH rather than the sensor to degrade the performance of the $\textbf{SC}^3$ closed loop. This is because eavesdropping on the EIH-to-robot link is more effective as the sensor data contains redundant raw data. This figure indicates that a strategic design of the position is a significant consideration from both the perspective of a legitimate system designer aiming to safeguard the system and an eavesdropper seeking to maximize interception effectiveness.

\begin{figure} [t]
	\centering
	\includegraphics[width=0.9\linewidth]{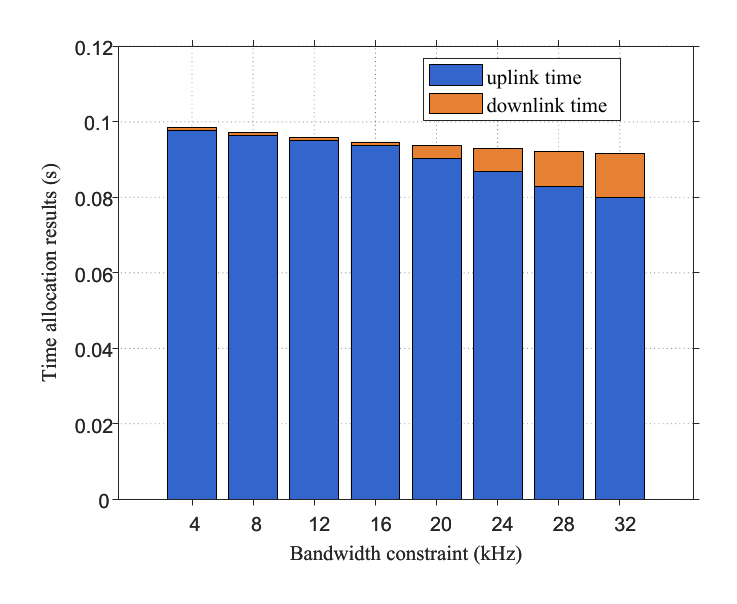}
	\caption{The time allocation results with different bandwidth constraints.}
	\label{fig:simu_T_vs_B}
\end{figure} 

Fig. \ref{fig:simu_T_vs_B} shows the optimal time allocation results for different bandwidth constraints with a fixed total cycle time $T=0.1$ s. It can be seen that the allocated uplink time is much longer than the downlink time. This is because the amount of raw sensing data transmitted on the uplink is substantially greater than the control commands transmitted on the downlink. It can also be observed that the transmission time (the sum of uplink time and downlink time) decreases as the bandwidth constraint increases. The reason is that more data are transmitted and processed with more bandwidth, requiring more computation time and therefore leaving less time for transmission. In addition, the ratio of uplink time to downlink time, $t_{\text{u}}/t_{\text{d}}$, decreases as the bandwidth constraint increases. The reason is analyzed as follows. According to \textbf{Proposition \ref{Prop1}}, the transmission time is inversely proportional to the corresponding data rate, i.e., $t_{\text{u}}/t_{\text{d}} = R_{\text{d}}/\left( \rho R_{\text{u}}\right) $. As the bandwidth increases, both the uplink power and downlink power need to be reduced to satisfy the security constraint. However, the downlink data is more critical for the closed-loop security because the uplink data contains redundant raw data. Therefore, the downlink power decreases more rapidly than the uplink power, leading to an increase in the ratio of uplink to downlink data rates. As a result, the ratio of uplink time to downlink time decreases as the bandwidth increases.

\begin{figure} [t]
	\centering
	\includegraphics[width=0.9\linewidth]{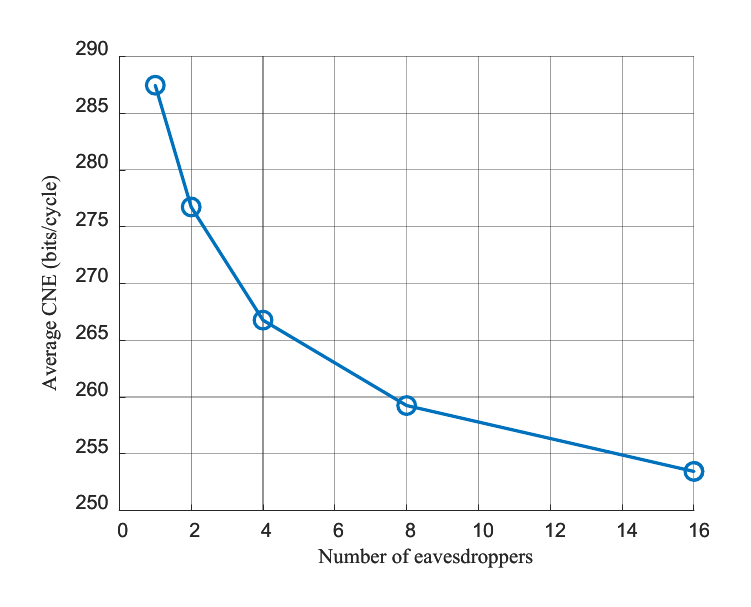}
	\caption{The average CNE varying with the number of eavesdroppers.}
	\label{fig:simu_multiEve}
\end{figure} 

In Fig. \ref{fig:simu_multiEve}, we evaluate the proposed scheme in a multi-eavesdropper scenario. Since the proposed algorithm is derived for the single-eavesdropper case, we extend it by selecting the strongest eavesdropping link on the uplink and the downlink among all eavesdroppers. We randomly generate 1,000 independent eavesdropper locations and show the average CNE versus the number of eavesdroppers $K$. It is shown that the average CNE decreases as $K$ increases, because the system is more likely to be eavesdropped when the number of eavesdroppers increases.

\begin{figure} [t]
	\centering
	\includegraphics[width=0.9\linewidth]{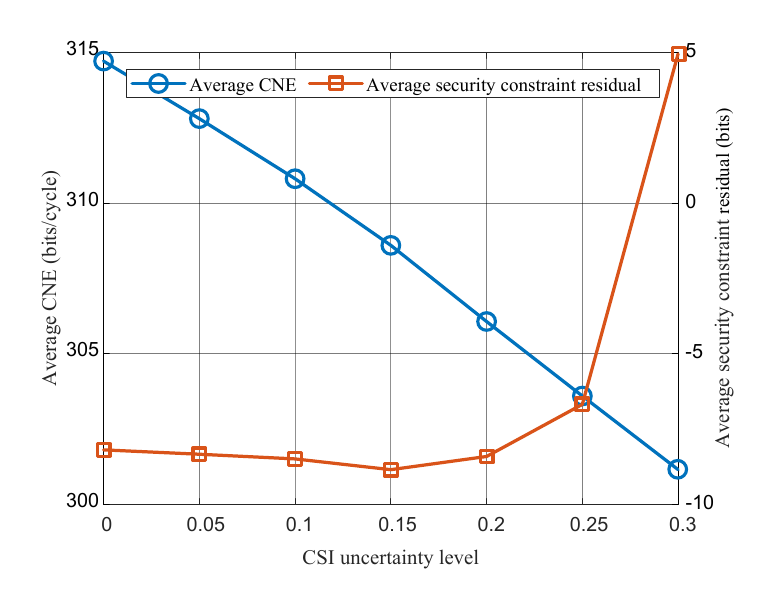}
	\caption{The average CNE and security constraint residual with the CSI uncertainty level.}
	\label{fig:simu_imperfectCSI}
\end{figure} 

In Fig. \ref{fig:simu_imperfectCSI}, we show the impact of the imperfect CSI in practical deployments. Specifically, for each link, we model the estimated large-scale channel gain as $g_{\text{est}}=g_{\text{true}}\left( 1+e\right) $, where $g_{\text{true}}$ is the true channel gain, $e\sim \mathcal{N}\left( 0, \mu^2\right) $ is the relative CSI error, and $\mu$ denotes the CSI uncertainty level. Moreover, small-scale Rayleigh fading is considered for the ground-to-ground sensor-eavesdropper link. The algorithm is executed using the estimated channel gains, while the performance is evaluated using the true channel gains. The results are averaged over 10000 Monte-Carlo trials. As $\mu$ increases, the allocation obtained using imperfect CSI gradually deviates from the true optimum, leading to a decrease in the average CNE. We also report the average security constraint residual, defined as $\mathbb{E}\!\left[\rho D_{\text{SE}} + D_{\text{CE}} - D_{\text{th}}\right]$, where a negative value indicates that the security constraint is satisfied with a safety margin. When $\mu$ is small, the residual can be below zero mainly due to the conservative security analysis in \eqref{RSEa}-\eqref{RSEc}. However, for larger $\mu$, the CSI mismatch dominates, and the residual increases rapidly with the uncertainty level, leading to a higher security risk. This highlights the necessity of incorporating robust designs when the CSI quality is limited by incorporating the CSI error statistics.

Next, we show the simulation results in the superior eavesdropping channel case where $g_{\text{u}}<g_{\text{SE}}$ and $g_{\text{d}}<g_{\text{CE}}$. Specifically, the locations of the sensor, the EIH, the robot, and the eavesdropper are set as $\mathbf{l}_{\text{s}} = [-500, 0, 0]$, $\mathbf{l}_{\text{EIH}} = [500, 0, 100]$, $\mathbf{l}_{\text{r}} = [-800, 0, 0]$, and $\mathbf{l}_{\text{e}} = [-460, 0, 0]$ in meters.

\begin{figure} [t]
	\centering
	\includegraphics[width=0.9\linewidth]{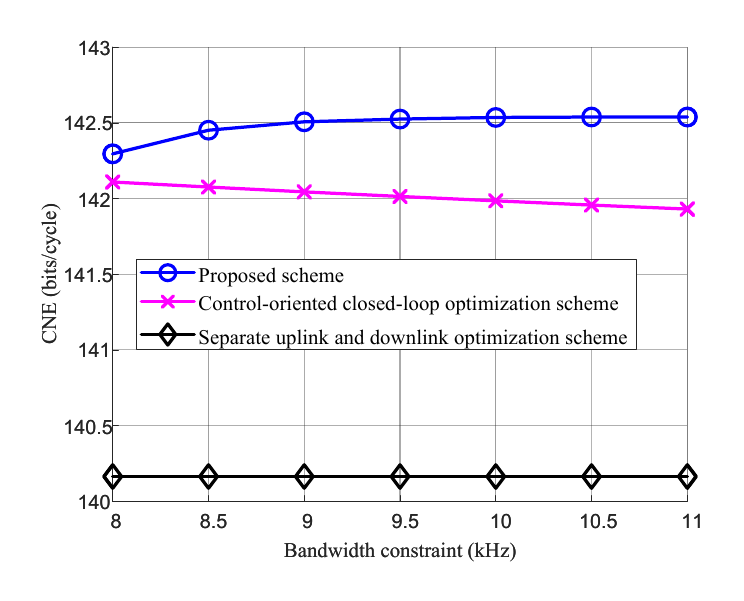}
	\caption{The CNE achieved by different schemes varying with the bandwidth constraint in the superior eavesdropping channel case.}
	\label{fig:simu_CNE_vs_B_case2}
\end{figure} 

Fig. \ref{fig:simu_CNE_vs_B_case2} illustrates the CNE achieved with three schemes under different bandwidth constraints in the superior eavesdropping channel case. It is shown that the trend of CNE with respect to bandwidth constraint in this scenario is similar to the trend of CNE with respect to uplink power constraint $P_{\text{umax}}$ in the superior legitimate channel case (previously shown in Fig. \ref{fig:simu_CNE_vs_pu}). This correspondence can be attributed to \textbf{Lemma \ref{Lemma2}}, which establishes that the monotonicity of the objective function in \eqref{P2} with respect to the bandwidth in the superior eavesdropping channel case is the same as that with respect to the transmit power in the superior legitimate channel case.

\begin{figure} [t]
	\centering
	\includegraphics[width=0.9\linewidth]{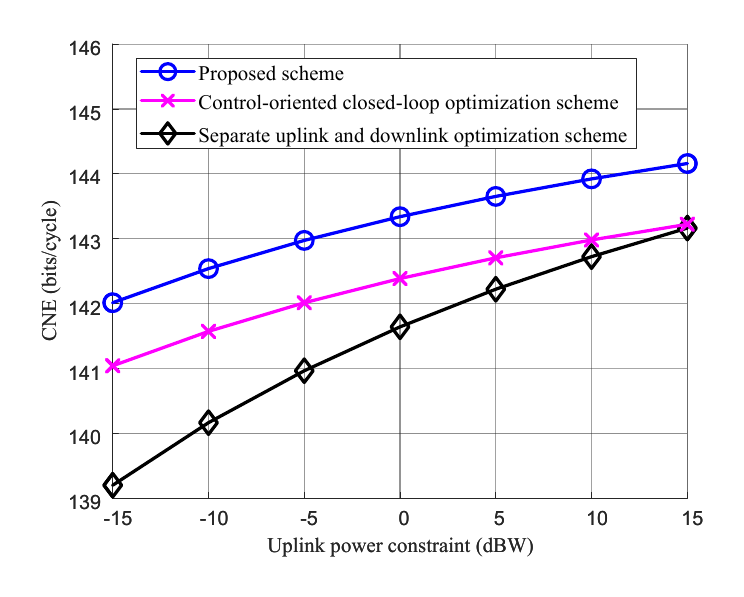}
	\caption{The CNE achieved by different schemes varying with the uplink power constraint in the superior eavesdropping channel case.}
	\label{fig:simu_CNE_vs_pu_case2}
\end{figure}

In Fig. \ref{fig:simu_CNE_vs_pu_case2}, we show the CNE achieved with three schemes under different uplink power constraints in the superior eavesdropping channel case. By comparing Fig. \ref{fig:simu_CNE_vs_pu} and Fig. \ref{fig:simu_CNE_vs_pu_case2}, we can find a significant difference in behavior: in the superior eavesdropping channel case, the CNE increases with $P_{\text{umax}}$ for all three schemes; while in the superior legitimate channel case, the separate optimization scheme's CNE was invariant and the control-oriented scheme's CNE decreases with $P_{\text{umax}}$. The increasing trend for all three schemes in the superior eavesdropping channel case is because improving the uplink power yields a more substantial improvement in the legitimate transmission rate compared to the increase in the eavesdropping rate, as shown in \textbf{Lemma \ref{Lemma2}}. Therefore, higher uplink power enhances the potential for secure communication, leading to improved CNE across all schemes. In addition, it can be observed that the separate scheme exhibits a faster marginal increase than the proposed scheme. The reason is that the separate scheme optimizes the uplink and downlink independently, and thus can benefit from the uplink power increase more significantly before the closed-loop bottleneck is activated. In contrast, the proposed scheme can achieve an uplink-downlink balance. This balanced approach inherently leads to diminishing marginal gains from further uplink power increases. It is worth noting that, despite the slower marginal increase with respect to the uplink power, the proposed scheme consistently achieves a higher overall CNE, demonstrating its superiority in closed-loop performance optimization.

We have also tested the other channel conditions, i.e., case III and case IV in Table \ref{tab2}. The proposed scheme consistently outperforms the baselines and exhibits similar trends to Figs. \ref{fig:simu_CNE_vs_B}–\ref{fig:simu_CNE_vs_pu}. Due to space limitations, the corresponding results are not shown here.

\section{Conclusion}
\label{Section:Conclusion}
In this paper, we have investigated the security of an NTN-assisted control system where a sensor, a UAV-mounted EIH, and a robot form an $\textbf{SC}^3$ closed loop to perform a critical task. The security implications of a potential eavesdropper have been considered. Under the constraint on the closed-loop eavesdropped information, we have formulated a CNE maximization problem that optimizes the bandwidth, power, and transmission times of the uplink and downlink transmissions, as well as the computing capability. For the superior legitimate channel case, we have proposed an efficient algorithm to find the globally optimal solution to the optimization problem based on the KKT conditions. For other cases, we have utilized MO theory to find the optimal solution. Simulation results have been provided to show the performance gain of the closed-loop structure-level PLS system design. Future work will consider robust designs under eavesdropper location uncertainty~\cite{position3}, extensions to multiple eavesdroppers, and the use of reconfigurable antenna architectures (e.g., rotatable antennas)~\cite{RA1,RA2} or integrated sensing and communications~\cite{ISAC} as additional spatial degrees of freedom for enhancing secrecy.

\appendices
\section{Proof of Proposition \ref{Prop1}}\label{Appendix_Prop1}
Based on \eqref{lemma1_2}, we can obtain 
\begin{equation}\label{A1_1}
	t^*_{\text{d}} = \frac{\rho R_{\text{u}}\left( p_{\text{u}},B_{\text{u}}\right)}{R_{\text{d}}\left( p_{\text{d}},B_{\text{d}}\right)}t^*_{\text{u}}.
\end{equation}	
Substituting \eqref{A1_1} into \eqref{security_equal}, we can obtain the optimal uplink time as
\begin{align}\label{A1_2}
	t^*_{\text{u}} = \frac{\frac{1}{\rho R_{\text{u}}\left(p_{\text{u}},B_{\text{u}} \right)} D_{\text{th}}}{\frac{R_{\text{SE}}\left(p_{\text{u}},B_{\text{u}} \right) }{R_{\text{u}}\left(p_{\text{u}},B_{\text{u}} \right)} +\frac{  R_{\text{CE}}\left(p_{\text{d}},B_{\text{d}} \right)}{R_{\text{\text{d}}}\left(p_{\text{d}},B_{\text{d}} \right)} }.
\end{align}
By substituting \eqref{A1_2} into \eqref{A1_1}, we can obtain \eqref{prop1_2} immediately. Based on \eqref{lemma1_1}, \eqref{prop1_1}, and \eqref{prop1_2}, the optimization problem \eqref{P1} can be reformulated as
\begin{subequations}\label{P2'}
	\begin{align} \max\limits_{\substack{p_{\text{u}},B_{\text{u}},p_{\text{d}},B_{\text{d}}}} &\frac{ D_{\text{th}}}{\frac{R_{\text{SE}}\left(p_{\text{u}},B_{\text{u}} \right) }{R_{\text{u}}\left(p_{\text{u}},B_{\text{u}} \right)} +\frac{  R_{\text{CE}}\left(p_{\text{d}},B_{\text{d}} \right)}{R_{\text{\text{d}}}\left(p_{\text{d}},B_{\text{d}} \right)} } \\
		\begin{split}\text{s.t.} \ &\left( \frac{1}{\rho R_{\text{u}}\left(p_{\text{u}},B_{\text{u}} \right)}+\frac{1}{R_{\text{d}}\left(p_{\text{d}},B_{\text{d}} \right)}+\frac{\alpha}{ \rho f_{\max}}\right) \\ &\quad\quad\quad\quad\quad\quad \times \frac{ D_{\text{th}}}{\frac{R_{\text{SE}}\left(p_{\text{u}},B_{\text{u}} \right) }{R_{\text{u}}\left(p_{\text{u}},B_{\text{u}} \right)} +\frac{  R_{\text{CE}}\left(p_{\text{d}},B_{\text{d}} \right)}{R_{\text{\text{d}}}\left(p_{\text{d}},B_{\text{d}} \right)} }  \leq T, \label{b6}\end{split}\\
		&0 \leq B_{\text{u}}\leq B_{\max},0 \leq B_{\text{d}}\leq B_{\max},  \label{f14}\\
		&0 \leq p_{\text{u}}\leq P_{\text{umax}}, 0 \leq p_{\text{d}}\leq P_{\text{dmax}}.
	\end{align}
\end{subequations}
As the numerator of the objective function in \eqref{P2'} is a constant parameter, maximizing the objective function is equivalent to minimizing the denominator term, indicating that the original problem \eqref{P1} is equivalent to \eqref{P2}.

\section{Proof of Proposition \ref{Prop2}}\label{Appendix_Prop2}
For notational convenience, we denote the objective function of \eqref{P2} as $f_{\text{obj}}\left( p_{\text{u}},B_{\text{u}},p_{\text{d}},B_{\text{d}}\right)$, and define an auxiliary function $g\left( p_{\text{u}},B_{\text{u}},p_{\text{d}},B_{\text{d}}\right)$ as 
\begin{align}\label{appendix_Prop2_1}
	&g\left( p_{\text{u}},B_{\text{u}},p_{\text{d}},B_{\text{d}}\right)\\\triangleq& \frac{D_{\text{th}}}{T}\left( \frac{1}{\rho R_{\text{u}}\left(p_{\text{u}},B_{\text{u}} \right)}+\frac{1}{R_{\text{d}}\left(p_{\text{d}},B_{\text{d}} \right)}+\frac{\alpha}{ \rho f_{\max}}\right),
\end{align}
It can be proven that $g\left( p_{\text{u}},B_{\text{u}},p_{\text{d}},B_{\text{d}}\right)$ is monotonically decreasing with respect to $p_{\text{u}}$, $B_{\text{u}}$, $p_{\text{d}}$, and $B_{\text{d}}$. With the above definition, the constraint \eqref{P2b} can be recast as
\begin{equation}\label{appendix_Prop2_2}
	f_{\text{obj}}\left( p_{\text{u}},B_{\text{u}},p_{\text{d}},B_{\text{d}}\right) \geq g\left( p_{\text{u}},B_{\text{u}},p_{\text{d}},B_{\text{d}}\right).
\end{equation}

Denoting the optimal solution of \eqref{P2} as $\left( p^*_{\text{u}},B^*_{\text{u}},p^*_{\text{d}},B^*_{\text{d}}\right)$, we have
\begin{equation}\label{appendix_Prop2_3}
	f_{\text{obj}}\left( p^*_{\text{u}},B^*_{\text{u}},p^*_{\text{d}},B^*_{\text{d}}\right) \geq g\left( p^*_{\text{u}},B^*_{\text{u}},p^*_{\text{d}},B^*_{\text{d}}\right)
\end{equation}
based on \eqref{P2b}.

First, we prove by contradiction that if $g_{\text{u}}>g_{\text{SE}}$, then $B_{\text{u}}^* = B_{\text{max}}$ at the optimal solution. In such a case, $f_{\text{obj}}\left( p_{\text{u}},B_{\text{u}},p_{\text{d}},B_{\text{d}}\right)$ is monotonically decreasing with respect to $B_{\text{u}}$ Based on \textbf{Lemma \ref{Lemma2}}. If $B_{\text{u}}^* < B_{\text{max}}$, we consider a new operating point $\left( p^*_{\text{u}}, B_{\text{max}},p^*_{\text{d}},B^*_{\text{d}}\right)$. Due to the decreasing monotonicity of $f_{\text{obj}}$ with respect to $B_{\text{u}}$, we can obtain that
\begin{equation}\label{appendix_Prop2_4}
f_{\text{obj}}\left( p^*_{\text{u}},B^*_{\text{u}},p^*_{\text{d}},B^*_{\text{d}}\right) > f_{\text{obj}}\left( p^*_{\text{u}},B_{\text{max}},p^*_{\text{d}},B^*_{\text{d}}\right).
\end{equation}
If the new point $\left( p^*_{\text{u}}, B_{\text{max}},p^*_{\text{d}},B^*_{\text{d}}\right)$ is feasible, i.e., $f_{\text{obj}}\left( p^*_{\text{u}},B_{\text{max}},p^*_{\text{d}},B^*_{\text{d}}\right) \geq g\left( p^*_{\text{u}},B_{\text{max}},p^*_{\text{d}},B^*_{\text{d}}\right)$, the new working point is a feasible point to \eqref{P2}, then it yields a strictly smaller objective function value than the assumed optimum, which is a contradiction.

Otherwise, if $f_{\text{obj}}\left( p^*_{\text{u}},B_{\text{max}},p^*_{\text{d}},B^*_{\text{d}}\right) < g\left( p^*_{\text{u}},B_{\text{max}},p^*_{\text{d}},B^*_{\text{d}}\right)$, which implies the new point is initially infeasible, we can improve the values of $p_{\text{u}}$, $p_{\text{d}}$, and $B_{\text{d}}$ to construct a new feasible point. Noting that we assume $\rho D_{\text{SE},0} + D_{\text{CE},0}> D_{\text{th}}$ (as discussed at the beginning of Section III), where $D_{\text{SE},0}$ and $D_{\text{CE},0}$ denote the corresponding eavesdropped data rate on the sensor-to-EIH link and the EIH-to-robot link when the security constraint is removed and resources are fully utilized (i.e., $p_{\text{u},0}=P_{\text{umax}}$, $p_{\text{d},0}=P_{\text{dmax}}$, and $B_{\text{u},0}=B_{\text{d},0}=B_{\text{max}}$). In this unconstrained scenario, the uplink and downlink transmission times can be denoted as~\cite{single_loop}
\begin{align}\label{appendix_Prop2_5}
	t_{\text{u},0} = \frac{\frac{1}{\rho R_{\text{u}}\left( P_{\text{umax}},B_{\text{max}}\right)}T}{\frac{1}{\rho R_{\text{u}}\left( P_{\text{umax}},B_{\text{max}}\right)}+\frac{1}{ R_{\text{d}}\left( P_{\text{dmax}},B_{\text{max}}\right)}+\frac{\alpha}{\rho f_{\text{max}}}},\\
	t_{\text{d},0} = \frac{\frac{1}{R_{\text{d}}\left( P_{\text{dmax}},B_{\text{max}}\right)}T}{\frac{1}{\rho R_{\text{u}}\left( P_{\text{umax}},B_{\text{max}}\right)}+\frac{1}{ R_{\text{d}}\left( P_{\text{dmax}},B_{\text{max}}\right)}+\frac{\alpha}{\rho f_{\text{max}}}}.
\end{align}	
Substituting the above values into the assumption $\rho D_{\text{SE},0} + D_{\text{CE},0}> D_{\text{th}}$ yields
\begin{align}\label{appendix_Prop2_6}
\frac{\frac{R_{\text{SE}}\left( P_{\text{umax}},B_{\text{max}}\right)}{R_{\text{u}}\left( P_{\text{umax}},B_{\text{max}}\right)}T+\frac{R_{\text{CE}}\left( P_{\text{dmax}},B_{\text{max}}\right)}{R_{\text{d}}\left( P_{\text{dmax}},B_{\text{max}}\right)}T}{\frac{1}{\rho R_{\text{u}}\left( P_{\text{umax}},B_{\text{max}}\right)}+\frac{1}{ R_{\text{d}}\left( P_{\text{dmax}},B_{\text{max}}\right)}+\frac{\alpha}{\rho f_{\text{max}}}}> D_{\text{th}},
\end{align}	
which can be recast using our defined functions as
\begin{equation}\label{appendix_Prop2_7}
	f_{\text{obj}}\left( P_{\text{umax}},B_{\text{max}},P_{\text{dmax}},B_{\text{max}}\right) > g\left( P_{\text{umax}},B_{\text{max}},P_{\text{dmax}},B_{\text{max}}\right).	
\end{equation}

Given $f_{\text{obj}}\left( p^*_{\text{u}},B_{\text{max}},p^*_{\text{d}},B^*_{\text{d}}\right) < g\left( p^*_{\text{u}},B_{\text{max}},p^*_{\text{d}},B^*_{\text{d}}\right)$ and \eqref{appendix_Prop2_7}, there must exist an intermediate feasible point $\left( p_{\text{u},1},B_{\text{max}},p_{\text{d},1},B_{\text{d},1}\right)$ with $p^*_{\text{u}} \leq p_{\text{u},1} \leq P_{\text{umax}}$, $p^*_{\text{d}} \leq p_{\text{d},1} \leq P_{\text{dmax}}$, and $B^*_{\text{d}} \leq B_{\text{d},1} \leq B_{\text{max}}$, such that
\begin{equation}\label{appendix_Prop2_8}
	f_{\text{obj}}\left( p_{\text{u},1},B_{\text{max}},p_{\text{d},1},B_{\text{d},1}\right) = g\left( p_{\text{u},1},B_{\text{max}},p_{\text{d},1},B_{\text{d},1}\right),
\end{equation}
due to the continuity of $f_{\text{obj}}$ and $g$.

For this feasible point $\left( p_{\text{u},1},B_{\text{max}},p_{\text{d},1},B_{\text{d},1}\right)$, and we have
\begin{align}
	f_{\text{obj}}\left( p_{\text{u},1},B_{\text{max}},p_{\text{d},1},B_{\text{d},1}\right) &= g\left( p_{\text{u},1},B_{\text{max}},p_{\text{d},1},B_{\text{d},1}\right)\label{appendix_Prop2_9_1}\\
	&< g\left( p^*_{\text{u}},B^*_{\text{u}},p^*_{\text{d}},B^*_{\text{d}}\right)\label{appendix_Prop2_9_2}\\
	&\leq f_{\text{obj}}\left( p^*_{\text{u}},B^*_{\text{u}},p^*_{\text{d}},B^*_{\text{d}}\right),\label{appendix_Prop2_9_3}
\end{align}
where \eqref{appendix_Prop2_9_2} holds because $g\left( p_{\text{u}},B_{\text{u}},p_{\text{d}},B_{\text{d}}\right)$ is monotonically decreasing with respect to $p_{\text{u}}$, $B_{\text{u}}$, $p_{\text{d}}$, and $B_{\text{d}}$, and \eqref{appendix_Prop2_9_3} follows from \eqref{appendix_Prop2_2}. This chain $f_{\text{obj}}\left( p_{\text{u},1},B_{\text{max}},p_{\text{d},1},B_{\text{d},1}\right)<f_{\text{obj}}\left( p^*_{\text{u}},B^*_{\text{u}},p^*_{\text{d}},B^*_{\text{d}}\right)$ conflicts the assumption that $\left( p^*_{\text{u}},B^*_{\text{u}},p^*_{\text{d}},B^*_{\text{d}}\right)$ is optimal. Therefore, when $g_{\text{u}}>g_{\text{SE}}$, the optimal uplink bandwidth must be $B_{\text{u}}^* = B_{\text{max}}$. The proofs for other cases ($g_{\text{u}}<g_{\text{SE}}$, $g_{\text{d}}>g_{\text{CE}}$, and $g_{\text{d}}<g_{\text{CE}}$) follow analogous arguments.

Next, we show that the \eqref{P2b} must hold with equality. Considering, for example, the case $g_{\text{u}}>g_{\text{SE}}$, we have
\begin{align}
	\lim\limits_{p_{\text{u}} \rightarrow 0}\frac{R_{\text{SE}}\left(p_{\text{u}},B_{\text{u}} \right) }{R_{\text{u}}\left(p_{\text{u}},B_{\text{u}} \right)} = \frac{g_{\text{SE}} }{g_{\text{u}}},\\
	\lim\limits_{p_{\text{u}} \rightarrow 0}\frac{1 }{\rho R_{\text{u}}\left(p_{\text{u}},B_{\text{u}} \right)} = +\infty.
\end{align}
Based on the above limits, when $p_{\text{u}} \rightarrow 0$, we can prove that the constraint \eqref{P2b} is not satisfied, as $f_{\text{obj}}$ is finite while $g +\infty$. The above analysis indicates that $p_{\text{u}}$ cannot be arbitrarily small due to the restriction of \eqref{P2b}. Now suppose at the optimum, we have $f_{\text{obj}}\left( p_{\text{u}},B_{\text{u}},p_{\text{d}},B_{\text{d}}\right) > g\left( p_{\text{u}},B_{\text{u}},p_{\text{d}},B_{\text{d}}\right)$ (i.e., \eqref{P2b} holds with strict inequality). We can decrease the value of $p_{\text{u}}$ so that the equality of \eqref{P2b} is satisfied. As the objective function is monotonically increasing with respect to $p_{\text{u}}$, the decrease of $p_{\text{u}}$ can always obtain a better solution. Similar reasoning applies by adjusting $p_{\text{d}}$, $B_{\text{u}}$, or $B_{\text{d}}$ for other case.

\section{Proof of Lemma \ref{Lemma3}}\label{Appendix_Lemma3}
The derivatives of $f_1 \left( p_{\text{u}} \right)$ and $h_1 \left( p_{\text{u}} \right)$ are
\begin{align}\label{A_Lemma3_1}
	f'_1 \left( p_{\text{u}} \right) &= \frac{\frac{a}{1+a p_{\text{u}}}\log \left( 1+b p_{\text{u}}\right) - \frac{b}{1+b p_{\text{u}}}\log \left( 1+a p_{\text{u}}\right)}{\log^2 \left( 1+b p_{\text{u}}\right)},\\
	h'_1 \left( p_{\text{u}} \right) &= -\frac{\log \left( 2 \right) D_{\text{th}}}{\rho TB_{\text{max}}}\frac{ \frac{b}{1+b p_{\text{u}}}}{\log^2 \left( 1+b p_{\text{u}}\right)},
\end{align}
where we denote $a = \frac{ g_{\text{SE}}}{B_{\text{max}N_0}}$ and $b = \frac{ g_{\text{u}}}{B_{\text{max}N_0}}$ for convenience. Therefore, the considered function can be calculated as
\begin{align}\label{A_Lemma3_2}
	\frac{f'_1 \left( p_{\text{u}} \right)}{h'_1 \left( p_{\text{u}} \right)} &=-\frac{\rho TB_{\text{max}}}{\log \left( 2 \right)\! D_{\text{th}}} \frac{\frac{a}{1+a p_{\text{u}}}\log \left( 1\! +\! a p_{\text{u}}\right)\! - \! \frac{b}{1+b p_{\text{u}}}\log \left( 1\! +\! b p_{\text{u}}\right)}{ \frac{b}{1+b p_{\text{u}}}},
\end{align}
whose derivative with respect to $p_{\text{u}}$ is
\begin{align}\label{A_Lemma3_3}
	\left[ \frac{f'_1 \left( p_{\text{u}} \right)}{h'_1 \left( p_{\text{u}} \right)}\right] ' &=\frac{\rho TB_{\text{max}}}{\log \left( 2 \right)\! D_{\text{th}}} \frac{a(a - b)\log(1+b p_{\text{u}})}{b(1+a p_{\text{u}})^2}.
\end{align}

When $g_{\text{u}}>g_{\text{SE}}$, we have $a<b$. Based on \eqref{A_Lemma3_3}, we have $	\left[ \frac{f'_1 \left( p_{\text{u}} \right)}{h'_1 \left( p_{\text{u}} \right)}\right] '<0$, indicating that $\frac{f'_1 \left( p_{\text{u}} \right)}{h'_1 \left( p_{\text{u}} \right)}$ is decreasing with respect to $p_{\text{u}}$. Following a similar procedure, we can prove that  $\frac{f'_2 \left( p_{\text{d}} \right)}{h'_2 \left( p_{\text{d}} \right)}$ is decreasing with respect to $p_{\text{d}}$, which completes the proof.

\section{Proof of Proposition \ref{Prop3}}\label{Appendix_Prop3}
According to \textbf{Lemma \ref{Lemma3}}, \eqref{equation1} defines a relationship as
\begin{align}\label{A_Prop3_1}
	p_{\text{d}} = C_1 \left( p_{\text{u}} \right),
\end{align}
where $C_1 \left( p_{\text{u}} \right)$ is an increasing function of $p_{\text{u}}$.

In addition, as the function $	f_1 \left( p_{\text{u}} \right)-h_1 \left( p_{\text{u}} \right)$ and $f_2 \left( p_{\text{d}} \right)-h_2 \left( p_{\text{d}} \right)$ are increasing with respect to $p_{\text{u}}$ and $p_{\text{d}}$, respectively, expression \eqref{equation2} can be expressed as
\begin{align}\label{A_Prop3_2}
	p_{\text{d}} = C_2 \left( p_{\text{u}} \right),
\end{align}
where $C_2\left( p_{\text{u}} \right)$ is a decreasing function of $p_{\text{u}}$. Since $C_1 \left( p_{\text{u}} \right)$ is increasing and $C_2 \left( p_{\text{u}} \right)$ is decreasing, system of equations, \eqref{equation1} and \eqref{equation2}, can have at most one intersection point, denoted as $\left( \hat{p_{\text{u}}}, \hat{p_{\text{d}}} \right)$ if it exists.

Next, we analyze the objective function of problem \eqref{P3}, $f_1\left( p_{\text{u}} \right) + f_2\left(  p_{\text{d}} \right)$, along the curve defined by equation \eqref{equation2}, i.e., $p_{\text{d}} = C_2 \left( p_{\text{u}} \right)$. The constrained objective function value, denoted as $O\left(p_{\text{u}}\right)  $, can be written as
\begin{align}\label{A_Prop3_3}
O\left(p_{\text{u}}\right) = f_1\left( p_{\text{u}} \right) + f_2\left( C_2 \left( p_{\text{u}} \right) \right),
\end{align}
whose derivative with respect to $p_{\text{u}}$ can be calculated as
\begin{align}\label{A_Prop3_4}
	\frac{\mathrm{d}O\left(p_{\text{u}}\right)}{\mathrm{d}p_{\text{u}}} = f'_1\left( p_{\text{u}} \right) + f'_2\left( C_2 \left( p_{\text{u}} \right) \right)\frac{\mathrm{d}C_2\left(p_{\text{u}}\right)}{\mathrm{d}\left(p_{\text{u}}\right)}.
\end{align}
Utilizing implicit differentiation of \eqref{equation2}, the derivative of $C_2 \left( p_{\text{u}} \right)$ can be calculated as
\begin{align}\label{A_Prop3_5}
	\frac{\mathrm{d}C_2\left(p_{\text{u}}\right)}{\mathrm{d}p_{\text{u}}} =-\frac{f'_1\left( p_{\text{u}} \right)-h'_1\left( p_{\text{u}} \right)}{f'_2\left( C_2 \left( p_{\text{u}} \right) \right)-h'_2\left(C_2 \left( p_{\text{u}} \right) \right)}.
\end{align}

Substituting \eqref{A_Prop3_5} into \eqref{A_Prop3_4}, we have
\begin{align}
	\frac{\mathrm{d}O\left(p_{\text{u}}\right)}{\mathrm{d}p_{\text{u}}} &= \frac{f'_2\left( C_2 \left( p_{\text{u}} \right) \right)h'_1\left( p_{\text{u}} \right)-f'_1\left( p_{\text{u}} \right)h'_2\left( C_2 \left( p_{\text{u}} \right) \right)}{f'_2\left( C_2 \left( p_{\text{u}} \right) \right)-h'_2\left(C_2 \left( p_{\text{u}} \right) \right)}\label{A_Prop3_6_a}	\\
	& = \frac{f'_2\left( p_{\text{d}}  \right)h'_1\left( p_{\text{u}} \right)-f'_1\left( p_{\text{u}} \right)h'_2\left(  p_{\text{d}} \right)}{f'_2\left(  p_{\text{d}}  \right)-h'_2\left(p_{\text{d}}\right)}\label{A_Prop3_6_b}	\\
	& =\frac{\frac{f'_2 \left( p_{\text{d}} \right)}{h'_2 \left( p_{\text{d}} \right)}-	\frac{f'_1 \left( p_{\text{u}} \right)}{h'_1 \left( p_{\text{u}} \right)}}{f'_2\left(  p_{\text{d}}  \right)-h'_2\left(p_{\text{d}}\right)}h'_1\left( p_{\text{u}} \right)h'_2\left( p_{\text{d}} \right),\label{A_Prop3_6_c}	
\end{align}
where  $p_{\text{d}} = C_2 \left( p_{\text{u}} \right)$ is used in \eqref{A_Prop3_6_b} for brevity.
Based on the fact that $f'_1 \left( p_{\text{u}} \right)>0$, $f'_2 \left( p_{\text{d}} \right)>0$,$h'_1 \left( p_{\text{u}} \right)<0$, and $h'_2 \left( p_{\text{d}} \right)<0$, the sign of $\frac{\mathrm{d}O\left(p_{\text{u}}\right)}{\mathrm{d}p_{\text{u}}}$ is determined by the sign of $\frac{f'_2 \left( p_{\text{d}} \right)}{h'_2 \left( p_{\text{d}} \right)}-	\frac{f'_1 \left( p_{\text{u}} \right)}{h'_1 \left( p_{\text{u}} \right)}$. Specifically, when $\frac{f'_1 \left( p_{\text{u}} \right)}{h'_1 \left( p_{\text{u}} \right)}>\frac{f'_2 \left( p_{\text{d}} \right)}{h'_2 \left( p_{\text{d}} \right)}$, we have $\frac{\mathrm{d}O\left(p_{\text{u}}\right)}{\mathrm{d}p_{\text{u}}}<0$, and when  $\frac{f'_1 \left( p_{\text{u}} \right)}{h'_1 \left( p_{\text{u}} \right)}<\frac{f'_2 \left( p_{\text{d}} \right)}{h'_2 \left( p_{\text{d}} \right)}$, we have $\frac{\mathrm{d}O\left(p_{\text{u}}\right)}{\mathrm{d}p_{\text{u}}}>0$.

At the intersection point $\left( \hat{p_{\text{u}}}, \hat{p_{\text{d}}} \right)$, we have $\frac{f'_1 \left( \hat{p_{\text{u}}} \right)}{h'_1 \left(\hat{p_{\text{u}}} \right)}=\frac{f'_2 \left( \hat{p_{\text{d}}} \right)}{h'_2 \left( \hat{p_{\text{d}}} \right)}$. Next, we consider a point $\left( {p_{\text{u}}}, {p_{\text{d}}} \right)$ on curve $p_{\text{d}} = C_2 \left( p_{\text{u}} \right)$. If $p_{\text{u}}<\hat{p_{\text{u}}}$, we have $p_{\text{d}}=C_2 \left( p_{\text{u}} \right)>C_2 \left( \hat{p_{\text{u}}}\right)=\hat{p_{\text{d}}}$ as $C_2$ is decreasing. Based on \textbf{Lemma \ref{Lemma3}}, $\frac{f'_1 \left( p_{\text{u}} \right)}{h'_1 \left( p_{\text{u}} \right)}$ and $\frac{f'_2 \left( p_{\text{d}} \right)}{h'_2 \left( p_{\text{d}} \right)}$ are decreasing functions. Therefore, we have $\frac{f'_1 \left( p_{\text{u}} \right)}{h'_1 \left( p_{\text{u}} \right)}>\frac{f'_1 \left( \hat{p_{\text{u}}} \right)}{h'_1 \left( \hat{p_{\text{u}}} \right)} = \frac{f'_2 \left( \hat{p_{\text{d}}} \right)}{h'_2 \left( \hat{p_{\text{d}}} \right)}>\frac{f'_2 \left( p_{\text{d}} \right)}{h'_2 \left( p_{\text{d}} \right)}$, which indicates that $\frac{\mathrm{d}O\left(p_{\text{u}}\right)}{\mathrm{d}p_{\text{u}}}<0$. Thus, $O\left(p_{\text{u}}\right) $ is decreasing for $p_{\text{u}}<\hat{p_{\text{u}}}$. Similarly, when $p_{\text{u}}>\hat{p_{\text{u}}}$, it can shown that $O\left(p_{\text{u}}\right)$ is increasing with respect to $p_{\text{u}}$.  

Based on the above analysis, $O\left(p_{\text{u}}\right)$ has a unique minimum at  $p_{\text{u}}= \hat{p_{\text{u}}}$ along the curve $p_{\text{d}} = C_2 \left( p_{\text{u}} \right)$. Therefore, $\left( \hat{p_{\text{u}}}, \hat{p_{\text{d}}} \right)$ is the point to minimize the objective function $f_1\left( p_{\text{u}} \right) + f_2\left( p_{\text{d}} \right) $ subject to constraint \eqref{equation2} (i.e., \eqref{P3b} holding with equality). According to \textbf{Proposition \ref{Prop2}}, we know the optimal solution to \eqref{P3} must satisfy constraint \eqref{P3b} with equality. Therefore, if the point $\left( \hat{p_{\text{u}}}, \hat{p_{\text{d}}} \right)$ is feasible with $\hat{p_{\text{u}}}\leq P_{\text{umax}}$ and $\hat{p_{\text{d}}}\leq P_{\text{dmax}}$, it is the global solution to problem \eqref{P3}. 

Otherwise, if the intersection does not exist, or if $\left( \hat{p_{\text{u}}}, \hat{p_{\text{d}}} \right)$ is not feasible, then $O\left(p_{\text{u}}\right)$ is monotonous within the interval $p_{\text{u}} \in \left[ C_2^{-1}\left( P_{\text{dmax}}\right) ,P_{\text{umax}}\right] $, where $C_2^{-1}$ denotes the inverse function of $C_2$. In such a case, the minimum of $O\left(p_{\text{u}}\right)$ must occur at one of the boundary points corresponding to either $p_{\text{u}}=P_{\text{umax}}$ or $p_{\text{d}}=P_{\text{dmax}}$. The boundary point yielding the smaller objective function value is the optimal solution.

%{\appendices
%\section*{Proof of the First Zonklar Equation}
%Appendix one text goes here.
% You can choose not to have a title for an appendix if you want by leaving the argument blank
%\section*{Proof of the Second Zonklar Equation}
%Appendix two text goes here.}

\newpage

%\section{Biography Section}
%If you have an EPS/PDF photo (graphicx package needed), extra braces are needed around the contents of the optional argument to biography to prevent the LaTeX parser from getting confused when it sees the complicated
%$\backslash${\tt{includegraphics}} command within an optional argument. (You can create your own custom macro containing the $\backslash${\tt{includegraphics}} command to make things simpler here.)
 
%\vspace{11pt}

%\bf{If you include a photo:}\vspace{-33pt}
%\begin{IEEEbiography}[{\includegraphics[width=1in,height=1.25in,clip,keepaspectratio]{fig1}}]{Michael Shell}
%Use $\backslash${\tt{begin\{IEEEbiography\}}} and then for the 1st argument use $\backslash${\tt{includegraphics}} to declare and link the author photo.
%Use the author name as the 3rd argument followed by the biography text.
%\end{IEEEbiography}

\vspace{11pt}

\begin{IEEEbiography}[{\includegraphics[width=1in,height=1.25in,clip,keepaspectratio]{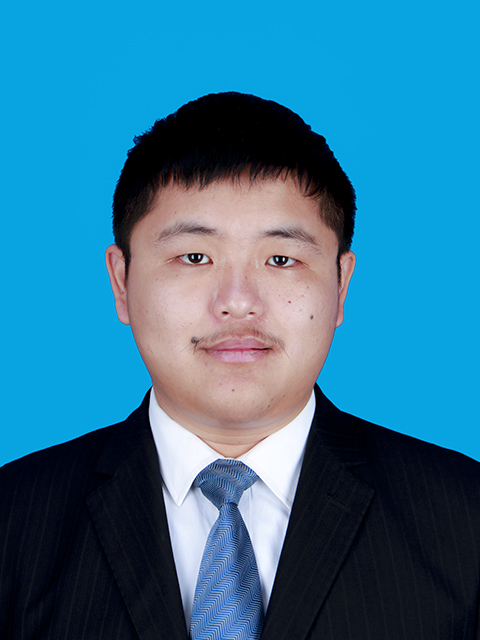}}]{Chengleyang Lei}
	received the B.S. degree in 2021 from the Department of Electronic Engineering, Tsinghua University, Beijing, China. He is currently pursuing the Ph.D degree with the Department of Electronic Engineering in Tsinghua University. His research interests include the communication and control integration, coordinated satellite-UAV-terrestrial networks, and future 6G technologies.
\end{IEEEbiography}

\begin{IEEEbiography}[{\includegraphics[width=1in,height=1.25in,clip,keepaspectratio]{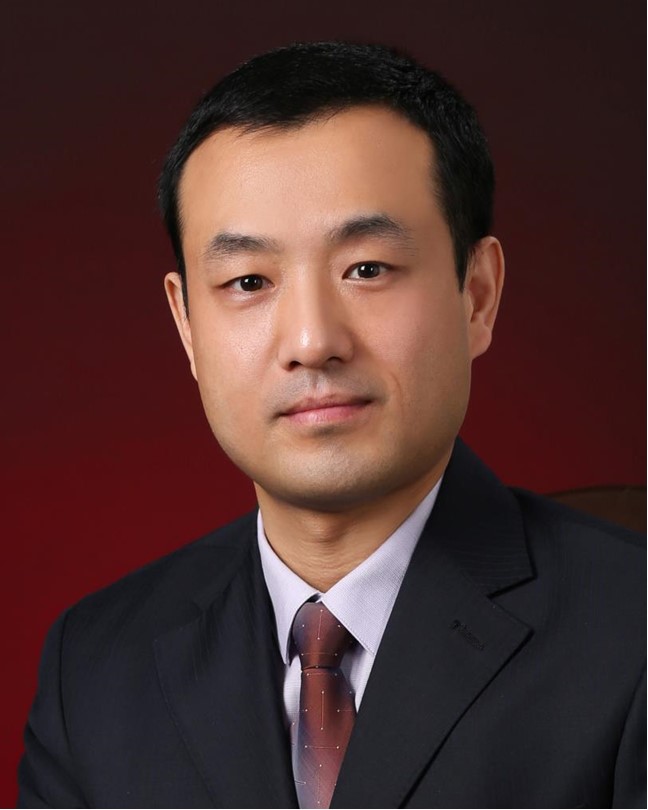}}]{Wei Feng}(Senior Member, IEEE) received the B.S. and Ph.D. degrees from the Department of Electronic Engineering, Tsinghua University, Beijing, China, in 2005 and 2010, respectively. He is currently a professor with the Department of Electronic Engineering, Tsinghua University. He also serves as Vice Dean of the Shuimu College, Tsinghua University, and Chief Scientist of Network Science with the State Key Laboratory of Space Network and Communications, Beijing, China. His research interests include space-air-ground integrated networks, 6G mobile communications, maritime Internet of things, and Internet of intelligent robots. Dr. Feng has received the National Technological Invention Award of China in 2016, the Outstanding Young Scholars Fund of Natural Science Foundation of China (NSFC) in 2019, and the Distinguished Young Scholars Fund of NSFC in 2024. He currently serves as the Assistant to the Editor-in-Chief of \textsc{China Communications}, and an Associate Editor for \textsc{IEEE Transactions on Aerospace and Electronic Systems}. He served as an Editor for \textsc{IEEE Transactions on Cognitive Communications and Networking} from 2019 to 2023. He is a Fellow of the China Institute of Communications.
\end{IEEEbiography}

\begin{IEEEbiography}[{\includegraphics[width=1in,height=1.25in,clip,keepaspectratio]{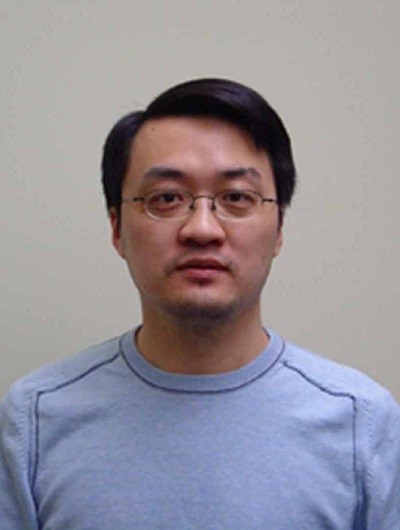}}]{Yunfei Chen}(Fellow, IEEE)
	received the B.E. and M.E. degrees in electronics engineering from Shanghai Jiaotong University, Shanghai, China, in 1998 and 2001, respectively, and the Ph.D. degree from the University of Alberta in 2006. 
	He is currently working as a Professor with the Department of Engineering, University of Durham, U.K. 
	His research interests include wireless communications, cognitive radios, wireless relaying, and energy harvesting.
\end{IEEEbiography}

\begin{IEEEbiography}[{\includegraphics[width=1in,height=1.25in,clip,keepaspectratio]{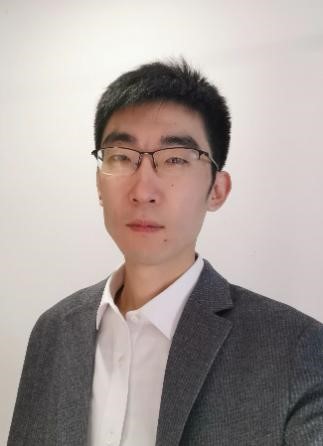}}]{Jue Wang}(Member, IEEE)
received the B.S. degree in communications engineering from Nanjing University, Nanjing, China, in 2006, and the M.S. and Ph.D. degrees from the National Communications Research Laboratory, Southeast University, Nanjing, in 2009 and 2014, respectively. From 2014 to 2016, he was with Singapore University of Technology and Design as a Post-Doctoral Research Fellow. He is currently with the School of Information Science and Technology, Nantong University, Nantong, China. His research interests include MIMO, RIS, NTN, and machine learning in communications. He has served as a technical program committee member and a reviewer for a number of IEEE conferences/journals. He was awarded as an Exemplary Reviewer of \textsc{IEEE Transactions on Communications} in 2014 and an Exemplary Reviewer of \textsc{IEEE Wireless Communications Letters} in 2021.
\end{IEEEbiography}

\begin{IEEEbiography}[{\includegraphics[width=1in,height=1.25in,clip,keepaspectratio]{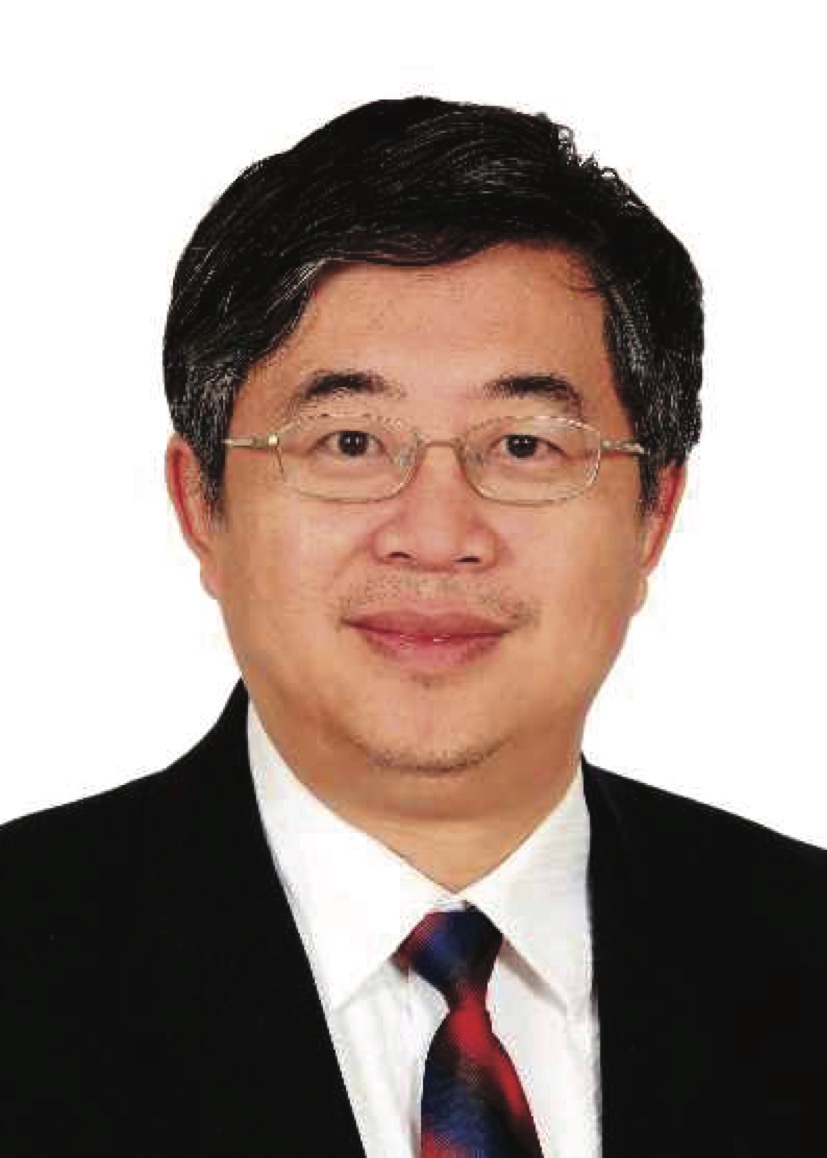}}]{Ning Ge}(Member, IEEE)
	received the B.S. and Ph.D. degrees from Tsinghua University, Beijing, China, in 1993 and 1997, respectively.
	From 1998 to 2000, he was with ADC Telecommunications, Dallas, TX, USA, where he researched the development of ATM switch fabric ASIC. 
	Since 2000, he has been a Professor with the Department of Electronics Engineering, Tsinghua University. 
	He has published over 100 papers. 
	His current research interests include communication ASIC design, short range wireless communication, and wireless communications.
	Dr. Ge is a senior member of the China Institute of Communications and the Chinese Institute of Electronics.
\end{IEEEbiography}

\begin{IEEEbiography}[{\includegraphics[width=1in,height=1.25in,clip,keepaspectratio]{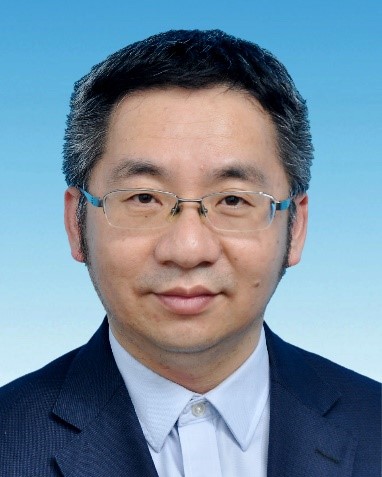}}]{Shi Jin}(Fellow, IEEE)
	received the B.S. degree in communications engineering from Guilin University of Electronic Technology, Guilin, China, in 1996, the M.S. degree from Nanjing University of Posts and Telecommunications, Nanjing, China, in 2003, and the Ph.D. degree in information and communications engineering from the Southeast University, Nanjing, in 2007. From June 2007 to October 2009, he was a Research Fellow with the Adastral Park Research Campus, University College London, London, U.K. He is currently with the Faculty of the School of Information Science and Engineering, Southeast University. His research interests include wireless communications, random matrix theory, and information theory. 
	
	Dr. Jin is serving as an Area Editor for the \textsc{IEEE Transactions on Communications} and \textit{IET Electronics Letters}. He was an Associate Editor for the \textsc{IEEE Transactions on Wireless Communications}, and \textsc{IEEE Communications Letters}, and \textit{IET Communications}. Dr. Jin and his coauthors have been awarded the IEEE Communications Society Stephen O. Rice Prize Paper Award in 2011, the IEEE Jack Neubauer Memorial Award in 2023, The IEEE Marconi Prize Paper Award in Wireless Communications in 2024, and the IEEE Signal Processing Society Young Author Best Paper Award in 2010 and Best Paper Award in 2022. 
\end{IEEEbiography}

\begin{IEEEbiography}[{\includegraphics[width=1in,height=1.25in,keepaspectratio]{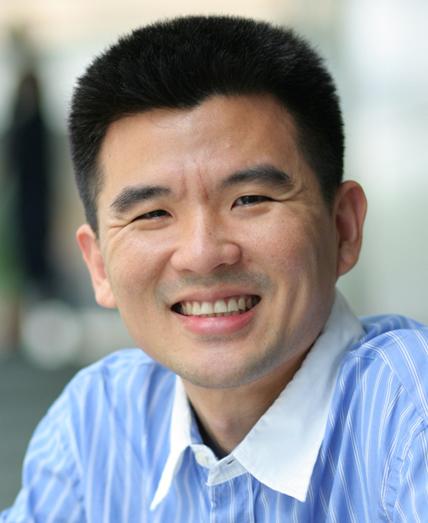}}]
	{Tony Q.S. Quek}(Fellow, IEEE) received the B.E.\ and M.E.\ degrees in electrical and electronics engineering from the Tokyo Institute of Technology in 1998 and 2000, respectively, and the Ph.D.\ degree in electrical engineering and computer science from the Massachusetts Institute of Technology in 2008. Currently, he is the Associate Provost (AI \& Digital Innovation) and Cheng Tsang Man Chair Professor with Singapore University of Technology and Design (SUTD). He also serves as the Director of the Future Communications R\&D Programme, and the ST Engineering Distinguished Professor. He is a co-founder of Silence Laboratories and NeuroRAN. His current research topics include wireless communications and networking, network intelligence, non-terrestrial networks, open radio access network, AI-RAN, and 6G.
	
	Dr.\ Quek was honored with the 2008 Philip Yeo Prize for Outstanding Achievement in Research, the 2012 IEEE William R. Bennett Prize, the 2015 SUTD Outstanding Education Awards -- Excellence in Research, the 2016 IEEE Signal Processing Society Young Author Best Paper Award, the 2017 CTTC Early Achievement Award, the 2017 IEEE ComSoc AP Outstanding Paper Award, the 2020 IEEE Communications Society Young Author Best Paper Award, the 2020 IEEE Stephen O. Rice Prize, the 2020 Nokia Visiting Professor, the 2022 IEEE Signal Processing Society Best Paper Award, the 2024 IIT Bombay International Award For Excellence in Research in Engineering and Technology, the IEEE Communications Society WTC Recognition Award 2024, and the Public Administration Medal (Bronze). He is an IEEE Fellow, a WWRF Fellow, an AIIA Fellow, a member of NAAI, and a Fellow of the Academy of Engineering Singapore.
\end{IEEEbiography}

\vfill

\end{document}